\newcommand{\reg}[1]{}
\newcommand{\als}[1]{}
\newcommand{\sgd}[1]{}
\newcommand{\opt}[1]{}
\def\x#1{\def\tempa{#1}\futurelet\next\x@i}
\def\x@i{\ifx\next\bgroup\expandafter\x@ii\else\expandafter\x@end\fi}
\def\x@ii#1{x_{\tempa}^{(#1)}}
\def\x@end{x_{\tempa}}
\newcommand{\y}[1]{y^{(#1)}}
\DeclareMathOperator*{\argmin}{arg\,min}
\newcommand{\reals}{\mathbb{R}}
\newcommand{\integers}{\mathbb{Z}}
\newcommand{\xspace}[1]{ {\cal X}_{#1}}
\newcommand{\yspace}{ {\cal Y}}
\newcommand{\xdim}{d}
\newcommand{\ndata}{n}
\newcommand{\tens}[1]{\bm{\mathcal{#1}}}
\newcommand{\mat}[1]{\mathbf{#1}}
\newcommand{\bvec}[1]{\mathbf{#1}}
\newcommand{\pdimb}{p}
\newcommand{\pdim}{\pdimb_{t}}
\newcommand{\pdimp}[3]{\pdimb_{#1#2#3}}
\newcommand{\parambase}{\theta}
\newcommand{\param}{\bm{\parambase}}
\newcommand{\parami}[1]{\parambase_{#1}}
\newcommand{\paramuni}[4]{\parambase_{#1#2#3#4}}
\newcommand{\mvf}[1]{{\cal #1}}
\newcommand{\ftcore}[1]{{\mvf{F}_{#1}}}
\newcommand{\uni}[4]{{#1}_{#2}^{(#3#4)}}
\newcommand{\unif}[3]{\uni{f}{#1}{#2}{#3}}
\newcommand{\fspace}{\mathscr{F}}
\newcommand{\ftspace}{\fspace_{\bvec{r}}}
\newcommand{\multi}{\bm{\alpha}}
\newcommand{\ftexp}{\sum_{i_{0}=1}^{r_0}\sum_{i_{1}=1}^{r_1}\cdots \sum_{i_{d}=1}^{r_d} \unif{1}{i_0}{i_1}(x_1)\unif{2}{i_1}{i_2}(x_2)\ldots\unif{d}{i_{d-1}}{i_d}(x_d)}
\newtheorem{proposition}{Proposition}
\newtheorem{assumption}{Assumption}
\def\SNL{Optimization and Uncertainty Quantification, Sandia National Laboratories, Albuquerque, NM, 87123}
\def\UM{Department of Aerospace Engineering  University of Michigan  Ann Arbor, MI, 48109}
\def\WORK{This work was performed while the author was at Sandia National Laboratories}
\author{A.A Gorodetsky\thanks{\UM} \thanks{\WORK}, J.D Jakeman\thanks{\SNL}}
\shorttitle{Gradient-based optimization for low-rank functions}
\title{Gradient-based optimization for low-rank functions}
\date{\today}
\title{Gradient-based Optimization for Regression in the Functional Tensor-Train Format}
\begin{document}

\maketitle

\begin{abstract}
We consider the task of low-multilinear-rank functional regression, i.e., learning a low-rank parametric representation of functions from scattered real-valued data. Our first contribution is the development and analysis of an efficient gradient computation that enables gradient-based optimization procedures, including stochastic gradient descent and quasi-Newton methods, for learning the parameters of a functional tensor-train (FT). 
The functional tensor-train uses the tensor-train (TT) representation of low-rank arrays as an ansatz for a class of low-multilinear-rank functions. The FT is represented by a set of matrix-valued functions that contain a set of univariate functions, and the regression task is to learn the parameters of these univariate functions. Our second contribution demonstrates that using nonlinearly parameterized univariate functions, e.g., symmetric kernels with moving centers, within each core can outperform the standard approach of using a linear expansion of basis functions. Our final contributions are new rank adaptation and group-sparsity regularization procedures to minimize overfitting. We use several benchmark problems to demonstrate at least an order of magnitude lower accuracy with gradient-based optimization methods than standard alternating least squares procedures in the low-sample number regime. We also demonstrate an order of magnitude reduction in accuracy on a test problem resulting from using nonlinear parameterizations over linear parameterizations.  Finally we compare regression performance with 22 other nonparametric and parametric regression methods on 10 real-world data sets. We achieve top-five accuracy for seven of the data sets and best accuracy for two of the data sets. These rankings are the best amongst parametric models and competetive with the best non-parametric methods.
\end{abstract}

\keywords{Tensors, Regression, Function Approximation, Uncertainty Quantification, Alternating Least Squares, Stochastic Gradient Descent}

\section{Introduction}

Assesment of uncertainty in a computational model is essential to increasing the credibility of simulation based knowledge discovery, prediction and design. 
Sources of model uncertainty must be identified and the effect of these uncertainties on the model output (prediction) quantified. The accuracy to which uncertainty can be quantified is limited by the computational resources available. Many applications require vast amounts of computational effort, thereby limiting the number model evaluations that can be used to interrogate the uncertainty in the system behavior. Consequently a significant portion of methods developed for uncertainty quantification (UQ) in recent years have focused on constructing surrogates of expensive simulation models using only a limited number of model evaluations. 

Within the computational science community, both parametric and non-parametric function approximation methods have been extensively used for Uncertainty Quantification (UQ). Non parametric Gaussian process models (GP) \citep{Rasmussen2006,Ohagan_K_JRSSB_1978}  and parametric Polynomial Chaos Expansions (PCE) \citep{Ghanem_Book_1991,Xiu2002} are arguably the two most popular methods used. Gaussian process regression can be interpreted as a Bayesian method for function approximation, providing a posterior probability distribution over functions. Maximum likelihood estimation and Markov Chain Monte Carlo sampling are the two most popular methods used to characterize the posterior distribution of the GP.
Polynomial chaos expansions represent a response surface as a linear combination of orthonormal multivariate polynomials. The choice of the orthonormal polynomials is related to  the distribution of the uncertain model variables. Various approaches have been adopted to compute the coefficients of the PCE basis. Approaches include, pseudo-spectral projection \citep{conrad2013,constantine2012}, sparse grid interpolation \citep{gana07,nobile08b}, and regression using \(\ell_1\)-minimization \citep{blatman2011,doostan2011,mathelin2012}. For a comparison between nonparametric GP methods and parametric PCE methods see \citep{Gorodetsky2016}, and for an attempt to combine the benefits of both approaches see \citep{Schobi2015}.

High-dimensional approximation problems, such as regression, pose challenges for both parametric and nonparametric representation formats. Parametric approaches, for example those using a PCE representation,  are limited by their expressivity; and increasing expressivity, for example by increasing the polynomial order, results in the curse of dimensionality for fixed polynomial order. Nonparametric methods, for example Gaussian process regression, have great expressive capabilities. However, they also encounter the curse of dimensionality since their excess risk grows exponentially with dimension \citep{Gyorfi2002}. 

To counteract these computational burdens for both types of methods, attention has focused on constraining the functional representation to limit the curse of dimensionality. One popular constraint is limiting the model to that of additive separable forms \citep{Hastie1990,Meier2009,Lie2008}
\begin{equation}\label{eq:additive}
  f(x) = f_1(x_1) + f_2(x_2) + \ldots f_d(x_d).
\end{equation}
One can also use second order interactions, e.g., \(f_{12}(x_1,x_2), f_{13}(x_1,x_3), \ldots\), to increase expressivity while maintaining tractability \citep{Kandasamy2016}. However, further increasing the number of interaction terms in the ANOVA model \citep{Fisher1925}
\begin{equation*}
  f(x) = \sum_{i} f_i(x_i) + \sum_{i,j}f_{ij}(x_i,x_j) +  \sum_{i,j,k} f_{ijk}(x_i,x_j,x_k) + \cdots
\end{equation*}
will still naturally encounter the curse of dimensionality unless adaptive methods that identify the order of interaction interactively are utilized \citep{gana07,Ma_Z_JCP_2010,Foo_K_JCP_2010,Jakeman_R_SGA_2013,Jakeman_ES_JCP_2015}.

In this paper, we propose algorithms to improve regression in a functional representation that takes advantage of \emph{low-rank} structure to mitigate the curse of dimensionality while maintaining high expressivity. Low-rank functional representations are parametric models that enable a wide variety of interactions between variables and can generate high order representations. More specifically, they are continuous analogues of tensor decompositions and exploit \emph{separability}, i.e., that a function can be represented by the sum of products of univariate functions. One example is the canonical polyadic (CP) \citep{Carroll1970} representation consisting of a sum of products of univariate functions \(f(x) = \sum_{i=1}^R f_1^{(i)}(x_1)\ldots f_d^{(i)}(x_d)\), and the number of free parameters of such a representation scales linearly with dimension. Instead of the CP format, we use a continuous analogue of the discrete tensor train (TT) decomposition \citep{Oseledets2011} called the functional tensor-train (FT) \citep{Oseledets2013,Gorodetsky2015} to allow for a greater number of interactions between variables.

Low-rank functional tensor decompositions have been used for regression previously. Existing approaches \citep{Doostan2013,Mathelin2014,Chevreuil2015,Rauhut2017} implicitely make two simplifying assumptions to facilitate the use of certain algorithms and data structures from the low-rank tensor decomposition literature. Specifically, they assume linear and identical basis expansions for each univariate function of a particular dimension.  These approaches convert the problem from one of determining a low-rank \emph{function} to one of representing the coefficients of a tensor-product basis as a low-rank \emph{tensor}. Following this conversion, many of these techniques  use alternating minimization to determine the coefficients of the FT. Alternating minimization, such as alternating least squares, transforms a nonlinear optimization problem for fitting the parameters of each univariate function to data into a linear problem by only considering a single dimension at a time. Existing approaches either use efficient linear algebra routines to solve the linear system at each iteration \citep{Doostan2013} or sparsity inducing methods such as the LASSO \citep{Mathelin2014}. Recently, iterative thresholding has also been used to find low rank coefficients; however, such an approach has been limited to problems with low-dimensionality \citep{Rauhut2017}.

In this paper we take a different approach: we use gradient-based optimization procedures such as quasi-Newton methods and stochastic gradient descent, and we do not restrict ourselves by the assumptions that lead to the consideration of a tensor of coefficients. Overall, our contributions include:
\begin{enumerate}
\item Derivation and computational complexity analysis of a gradient-based fitting procedure that yields more accurate approximations than alternating minimization
\item Usage of both linear and nonlinear parameterizations of univariate function in each mode to facilitate a larger class of functions than is possible when using solely linear representations, and
\item Creation of rank-adaptation and regularization schemes to reduce overfitting.
\end{enumerate}

We demonstrate the efficacy of our approach on both synthetic functions and on several real-world data sets used in existing literature to test other regression algorithms \citep{Kandasamy2016}. Specifically, we show that gradient-based optimization provides significant advantages in terms of approximation error, as compared with an alternating least squares approach, and that these advantages are especially apparent in the case of small amounts of data and large parameter sizes. To our knowledge, no gradient based procedure has been derived or evaluated for the low-rank functional format that we consider. We also show the benefits of nonlinear representations of univariate functions over the linear representations most frequently found in the literature. Our real-world data results show that our methodology is competitive with both nonparametric and parametric algorithms. We achieve top-five accuracy for seven of the 10 data sets considered and best accuracy for two of the data sets. These rankings are the best amongst parametric models and competitive with the best non-parametric methods. Finally we demonstrate that for some physical systems, exploiting low-rank structure is more effective than exploiting sparse structure which is common approach used for enabling uncertainty quantification of computational models.

\subsection{Related Work}

As mentioned above, the functional tensor-train decomposition was proposed as an ansatz for representing functions by \cite{Oseledets2013}, and computational routines for compressing a function into FT format have been developed before \citep{Gorodetsky2015}. In that setting, an approximation of a black-box function is sought to a prescribed accuracy. A sampling procedure and associated algorithm was designed to optimally evaluate the function in order to obtain an FT representation. In this work, we consider the setting of fixed data.

There has also been some recent work on regression in low-rank format \citep{Doostan2013,Mathelin2014,Chevreuil2015}. These approaches rely on linearity between the parameters of the low-rank format and the output of the function. Utilizing this relationship, they convert the low-rank function approximation to one of low-rank tensor decomposition for the coefficients of a tensor-product basis. In Section \ref{sec:lrf_vs_lrp} we show how the representation presented in those works can be obtained as a particular instantiation of the formulation we present here.

In spirit, our approach is also similar to the recent use of the tensor-train decomposition within fully connected neural networks \citep{Novikov2015}. There, the layers of a neural network are assumed to be a linear transformation of an input vector, and the weight matrix of the transformation is estimated from data. Their contribution is representing the weight matrix in a compressed TT-format. In this work, our low-rank format can be thought of as an alternative to the linear transformation layer of a neural network. Indeed, future work can focus on utilizing our proposed methodology within a hierarchy of low-rank approximations.

\section{Background}
In this section we establish notation and provide background for regression and low-rank tensor decompositions.

\subsection{Notation}
Let $\reals$ be the set of real numbers and \(\integers_{+}\) be the set of positive integers. Let \(\ndata \in \integers_{+}\), \(\xdim \in \integers_{+}\) and suppose that we are given i.i.d. data $\left(\x{}{i},\y{i}\right)_{i=1}^n$ such that each datum is sampled from a distribution $\mu_{xy}$ on a compact space \((\x{}{i},\y{i}) \in \xspace{} \times \yspace \subset \reals^{d} \times \reals.\) Let $\xspace{} = \xspace{1} \times \cdots \times \xspace{d}$  be a tensor product space with $\xspace{k} \subset \reals$. Let the marginal distribution of $\x{} = (\x{1},\ldots,\x{d}) \in \xspace{}$ be the tensor product measure $\mu_{x} = \mu_{\x{1}} \times \cdots \times \mu_{\x{d}}$, and assume that all integrals appearing in this paper are with respect to this measure. For example let $f:\xspace{} \to \yspace$ and $g:\xspace{} \to \yspace$, then the inner product is defined as \(\left\langle f,g \right\rangle = \int f(x)g(x)d\mu(x).\) Similarly, the $L_2$ norm is defined as \(\lVert f \rVert^2_2 = \left\langle f,f \right\rangle.\)

In this paper scalars and scalar-valued functions are denoted by lowercase letters, vectors are denoted by bold lower-case letters such as $\bvec{x},\bvec{y}$;  matrices are denoted with upper boldface such as $\mat{X},\mat{Y}$; tensors are denoted with upper boldface caligraphic letters such as $\tens{X},\tens{Y}$; and matrix-valued functions are denoted by upper caligraphic letters such as \(\mvf{F},\mvf{G}.\) An ordered sequence of elements of the same set are distinguished by parenthesized superscripts such as \(\x{}{i}\) or \(\y{i}\).
\subsection{Supervised learning}
The supervised learning task seeks a function \(f: \xspace{} \to \yspace\) that minimizes the average of a cost function \(g_f :\xspace{} \times \yspace{} \to \reals\)
\begin{equation} 
  J[f] = \int_{\xspace{} \times \yspace} g_f(x,y) d\mu(x,y),
\end{equation}
For example, the standard least squares objective is specified with \(g_f(x,y) = \left(y - f(x)\right)^2\). In this work, the cost functional cannot be exactly calculated; instead, data \(\left(\x{}{i},\y{i}\right)_{i=1}^n\) is used to estimate its value. The cost functional then becomes a sum over the data instead of an integral
\begin{equation} \label{eq:ls}
  J\left[f\right] = \frac{1}{\ndata} \sum_{i=1}^{\ndata} \left(\y{i} - f\left(\x{}{i}\right)\right)^2,
\end{equation}
where we have reused the notation \(J[f]\), and further references to \(J\) will use this definition.
To obtain an optimization problem over a finite set of variables, the search space of functions must be restricted. Nonparametric representations generally allow this space to vary depending upon the data, and parametric representations typically fix the representation. In either case, we denote this space as \(\fspace\) to seek an optimal function \(f^* \in \fspace\) such that

\begin{equation}\label{eq:fspace_learn}
  f^* = \arg \min_{f \in \fspace} J[f].
\end{equation}

One example of a common function space is the reproducing kernel Hilbert space, and resulting algorithms using this space include Gaussian process regression or radial basis function interpolation. Other examples include linear functions (resulting in linear regression) or polynomial functions. In this work, we consider a function space that incorporates all functions with a prescribed \emph{rank}, which will be defined in the next section.

When solving the supervised learning problem \eqref{eq:fspace_learn} it is often useful to introduce a regularization term to minimize overfitting. In this paper we will consider the following regularized learning problem
\begin{equation}\label{eq:fspace_learn_reg}
  f^* = \arg \min_{f \in \fspace} J[f]+\lambda \Omega[f],
\end{equation}
where \(\lambda \in \reals_{+}\) denotes a scaling factor and \(\Omega: \fspace \to \reals_{+}\) is a functional that penalizes certain functions in \(\fspace\). For example, the function \(\Omega(f) = \lVert f \rVert_2\) penalizes functions that have large \(L_2\) norms, and such a penalty has been used for a certain type of low-rank functional approximation before \citep{Doostan2013}. In this work, we  impose a group sparsity type regularization that has been shown to improve the prediction performance in other approximation settings \citep{Turlach_VW_Technometrics_2005,Yuan_L_JRSSSB_2006}. Such an approach seeks to increase parsimony by reducing the number of non-zero elements in an approximation. In Section \ref{sec:lowrank-constraint} we describe what this type of constrains means in the context of low-rank functional decompositions.

\subsection{Low-rank tensor decompositions}
The function space that we use to constrain our supervised learning problem is related to the concept of low-rank decompositions of multiway arrays, or \emph{tensors}. Tensor decompositions are multidimensional analogues of the matrix SVD and are used to mitigate the curse of dimensionality associated with storing multivariate arrays \citep{Kolda2009}.  Consider an array with \(\tens{A} \in \reals^{\pdimb_1 \times \cdots \times \pdimb_d}\), this array contains an exponentially growing number of elements as the dimension increases. Tensor decompositions can provide a compact representation of the tensor and have found wide spread usage for data compression and learning tasks \citep{Cichocki2009,Novikov2015,Yu2016}.

In this work, we consider the tensor-train (TT) decomposition \citep{Oseledets2011}. Specifically, we use the TT as an ansatz for representating low-rank functions. A TT representation of a tensor \(\tens{A}\) is defined by a list of \(3\)-way arrays \(\left(\tens{A}_k \in \reals^{r_{k-1} \times \pdimb_{k} \times r_{k}}\right)_{k=1}^{d}\) called TT-cores where \(r_0=r_d=1\) and \(r_k \in \integers_+\) for \(k=2,\ldots,d-1\). The sizes  of the cores \(\left(r_k\right)_{k=0}^d\) are called the \emph{TT-ranks}. In this format a tensor element is obtained according to the following multiplication
\begin{equation} \label{eq:tt}
\tens{A}[i_1,i_2,\ldots,i_d] = \tens{A}_{1} [i_1] \tens{A}_{2} [i_2] \cdots \tens{A}_d[i_d], \quad  1 < i_k < \pdimb_{k} \textrm{ for all } k,
\end{equation}
where \(\tens{A}_k[i_k] \in \reals^{r_{k-1} \times r_{k}}\) are matrices and the above equation describes a sequence of matrix multiplication.

In this format, storage complexity scales \textit{linearly} with dimension and quadratically with TT-rank. 
In the next section we discuss how it can be used as a framework for function approximation.

\section{Low-rank functions}

Low-rank formats for functions can be thought of as direct analogues to low-rank tensor decompositions. In this work, we focus on aa TT-like decomposition of functions called the functional tensor-train (FT) \citep{Oseledets2013}.
 \begin{equation}
  f(\x{1},\x{2},\ldots,\x{d}) = \ftexp,\label{eq:ftlong}
\end{equation}
where \(\uni{f}{k}{i}{j}:\xspace{k} \to \reals\), and \(r_0=r_d=1\) for single-output functions. A more compact expression is obtained by viewing a function value as a set of products of matrix-valued functions
\begin{equation}\label{eq:ft}
  f(\x{1},\x{2},\ldots,\x{d}) = \mvf{F}_1(x_1)\mvf{F}_2(x_2)\ldots \mvf{F}_d(x_d),
\end{equation}
where each matrix-valued function \(\mvf{F}_k:\xspace{k} \to \reals^{r_{k-1} \times r_k}\) is called a \textit{core} and can be visualized as an array of the univariate functions
\begin{equation}\label{eq:ftcore}
  \mvf{F}_{k}(x_k) = \left[
    \begin{array}{ccc}
      \unif{k}{1}{1}(\x{k}) & \cdot &\unif{k}{1}{r_{k}}(\x{k}) \\
      \vdots & \ddots & \vdots \\
      \unif{k}{r_{k-1}}{1}(\x{k}) & \cdot &\unif{k}{r_{k-1}}{r_{k}}(\x{k}) 
    \end{array}
  \right].
\end{equation}
If each univariate function is represented with $\pdimp{}{}{}$ parameters and \(r_k < r\) for all \(k\), then the storage complexity scales as \(\mathcal{O}(dpr^2)\). Comparing this representation with \eqref{eq:tt} we see a very close resemblence between the TT cores and the FT cores. Indeed they are both matrices when indexed by a discrete index \(i_k\) for the TT  or a continuous index \(x_k\) for the FT. We describe a closer relationship between the TT and the FT in the context of low-rank representations of functions in the next section.

\subsection{Parameterizations of low-rank functions}
\label{sec:params} 
An FT core is comprised of \(\xdim\) sets of univariate functions as shown in Figure \ref{fig:fthierarchy}. Each set of univariate functions, contains all of the parameters of the associated univariate functions. As a result the full FT is parameterized through the parameterization of its univariate functions. Let $\pdimp{k}{i}{j} \in \integers_+$ denote the number of parameters describing \(\unif{k}{i}{j}.\) Let $\param$ denote the vector of parameters of all the univariate functions. Then, there are a total of $\sum_{k=1}^{\xdim}\sum_{i=1}^{r_{k-1}}\sum_{j=1}^{r_k}\pdimp{k}{i}{j}$ parameters describing the FT, i.e., $\param \in \reals^{\pdim}$.

The parameter vector $\param$ is indexed by a multi-index \(\multi = (k,i,j,\ell)\) where \(k = 1,\ldots,\xdim\) corresponds to an input variable, \(i = 1,\ldots,r_{k-1}\) and \(j=1,\ldots,r_{k}\) correspond to a univariate function within the \(k\)th core, and \(\ell = 1,\ldots,\pdimp{k}{i}{j}\) corresponds to a specific parameter within that univariate function. In other words, we adopt the convention that $\param_{\multi} = \param_{kij\ell}$ refers to the \(\ell\)th parameter of the univariate function in the \(i\)th row and \(j\)th column of the \(k\)th core.

The additional flexibility of the FT representation allows both linear and nonlinear parameterizations of univariate functions. A linear parameterization represents a univariate function as an expansion of basis functions \(\left(\uni{\phi}{k\ell}{i}{j} : \xspace{k} \to \reals \right)_{\ell=1}^{\pdimp{k}{i}{j}}\) according to
\begin{equation}
  \unif{k}{i}{j}(\x{k};\param) = \sum_{\ell=1}^{\pdimp{k}{i}{j}}\paramuni{k}{i}{j}{\ell} \uni{\phi}{k\ell}{i}{j}(\x{k}).
\label{eq:linparam}
\end{equation}
Linear parameterizations are often convenient within the context of gradient-based optimization methods since the gradient with respect to a parameter is independent of the parameter values. Thus, one only needs to compute and store the evaluation of the basis functions a single time. Nonlinear parameterizations are more flexible and general, but often incur a greater computational cost within optimization. One example of a nonlinear parameterization is that of a Gaussian kernel, which can be written as 
\begin{equation}
  \unif{k}{i}{j}(\x{k};\param) = \sum_{\ell=1}^{\pdimp{k}{i}{j}/2}\paramuni{k}{i}{j}{\ell} \exp\left(-\frac{1}{\sigma^2}\left(\x{k} - \paramuni{k}{i}{j}{(\pdimp{k}{i}{j}/2+\ell)}\right)^2 \right), \label{eq:nonkernel}
\end{equation}
where \(\sigma > 0\)
Here, the first $\pdimp{k}{i}{j}/2$ parameters refer to the coefficients of radial basis functions and that second half of the parameters refer to the centers. 

\begin{figure}
  \includegraphics[width=\textwidth]{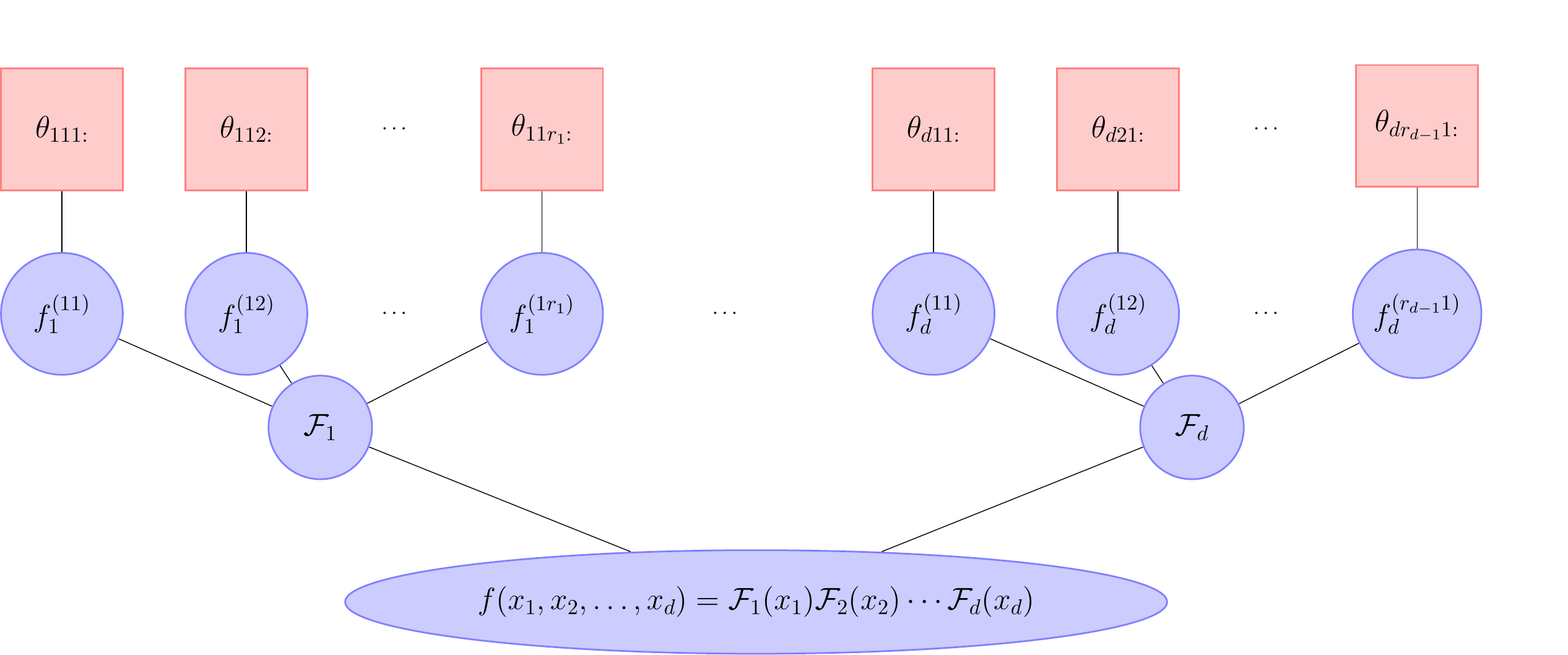}
  \caption{Object-oriented hierarchy for storing low-rank functions. Parameters are represented by red squares and functions are represented in blue.}
  \label{fig:fthierarchy}
\end{figure}

\subsection{Low-rank functions vs. low-rank coefficients}
\label{sec:lrf_vs_lrp}
The functional tensor train can be used by independently parameterizing the univariate functions of each core, and both linear and nonlinear parameterizations are possible. As described below, the advantage of this representation includes a naturally sparse storage scheme for the cores. Another advantage is the availablitiy of efficient computational algorithms for multilinear algebra that can adapt the representation of each univariate function individually as needed \citep{Gorodetsky2015,Gorodetsky2017} in the spirit of continuous computation pioneered by Chebfun \citep{Platte2010}.

Although the functional form of the TT is general, most of the literature makes two simplifying assumptions \citep{Doostan2013,Mathelin2014,Chevreuil2015}:
\begin{enumerate}
\item a \emph{linear} parameterization of each \(\unif{k\ell}{i}{j}\), and
\item an \emph{identitical basis} for the functions within each FT core, i.e., \(\pdimp{k}{i}{j} = \pdimp{k}{}{}\) and \(\uni{\phi}{k\ell}{i}{j} = \phi_{k\ell}\) for all \(i = 1, \ldots r_{k-1}\), \(j = 1, \ldots, r_{k},\) and \(\ell = 1,\ldots,\pdimp{k}{}{}\).
\end{enumerate}
These assumptions transform the problem of learning low-rank functions to the problem of learning low-rank coefficients. This transformation of the problem allows one to facilitate the use of discrete TT algorithms and theory. We will refer to representations using these assumptions as a functional tensor-train with TT coefficients or FT-c, where ``c'' stands for coefficients.

The FT-c representation stores the coefficients of a tensor-product basis \(\phi_{k\ell}\) for all \(k\) and \(\ell\) in TT format. Let \(\tens{F}_k \in \reals^{r_{k-1} \times \pdimp{k}{}{} \times r_{k}}\) be a tensor of the following form
\begin{equation}\label{eq:ttcore}
 \tens{F}_k[:,\ell,:] = \left[
    \begin{array}{ccc}
      \paramuni{k}{1}{1}{\ell} & \cdot & \paramuni{k}{1}{r_{k}}{\ell} \\
      \vdots & \ddots & \vdots \\
      \paramuni{k}{r_{k-1}}{1}{\ell} & \cdot & \paramuni{k}{r_{k-1}}{r_{k}}{\ell}
    \end{array}
  \right],
\end{equation}
for \(\ell = 1, \ldots, \pdimp{k}{}{}\). Function evaluations can be obtained from the from coefficients stored in TT format by performing the the following summation
\begin{equation} \label{eq:low_rank_coeffs}
  f(x_1,\ldots,x_{\xdim}) = \sum_{\ell_1=1}^{\pdimp{1}{}{}} \cdots \sum_{\ell_{\xdim}=1}^{\pdimp{\xdim}{}{}} \tens{F}_1[:,\ell_1,:] \cdots \tens{F}_{\xdim}[:,\ell_{\xdim},:] \phi_{1\ell_1}(x_1) \cdots \phi_{\xdim \ell_{\xdim}}(x_{\xdim}).
\end{equation}
From \eqref{eq:ft}, \eqref{eq:ttcore}, and \eqref{eq:low_rank_coeffs} we can see that the relationship between the TT cores \(\tens{F}_k\) and the FT cores \(\mvf{F}_k\) is
\begin{equation}\label{eq:tt_to_ft}
\mvf{F}_k(x_k) = \sum_{\ell=1}^{\pdimp{k}{}{}} \tens{F}_k[:,\ell,:]\phi_{k\ell}(x_k),
\end{equation}
where the basis function multiplies every element of the tensor. In other words, the TT cores represent a TT decomposition of the \(p_1 \times p_2 \times \cdots \times p_d\) coefficient tensor of a tensor-product basis.

The two assumptions required for converting the problem from one of low-rank function approximation to low-rank coefficient approximation often result in a computational burden in practice. The burden of the first assumption is clear, some functions are more compressible if nonlinear parameterizations are allowed. The second assumption can also limit compressibility, but also results in a larger computational and storage burden. One example in which this burden becomes obvious is when attempting to represent \emph{additively separable} functions, e.g., Equation \eqref{eq:additive}, in the FT format. As mentioned in the introduction, this representation is common for high-dimensional regression, and an FT can represent this function using cores that fave the following rank 2 structure.
\begin{align*}
 f(x_1,x_2,\ldots,x_{\xdim}) = \left[f_1(x_1) \ 1 \right] \left[ \begin{array}{cc} 1 & 0\\ f_2(x_2) & 1 \end{array} \right] \cdots \left[ \begin{array}{c} 1  \\ f_{\xdim}(x_{\xdim}) \end{array}\right].
\end{align*}
Suppose that each of the univariate functions can be represented with $\pdimp{k}{}{}$ parameters and the constants $0$ and $1$ can be stored with a single parameter. Then, the storage requirement is $\pdimp{k}{}{}+3$ parameters for the $d-2$ middle cores and $\pdimp{k}{}{} + 1$ for the outer cores, totaling $(\pdimp{k}{i}{j}+3)\xdim-4$ floating point values. In the TT case, the $0$ and $1$ terms must be stored with the \emph{same} complexity as the other terms, since the TT is a three-way array. Thus, the total number of parameters becomes $4\pdimp{k}{}{}(\xdim-2)$. Almost four times less storage is required for the FT format than the TT format, in the limit of large $\pdimp{k}{}{}$ and $d$. Essentially, the TT format does take into account the \emph{sparsity} of the cores. In this case, one can think of the FT format as storing the TT cores in a \emph{sparse} format. This burden is exacerbated if we add interaction terms to Equation \ref{eq:additive}.

\section{Low-rank supervised learning}
\label{sec:lowranksup}

In this section, we incorporate the FT model as a constraint in the supervised learning task. We discuss issues surrounding optimization algorithms, regularization, and choosing the TT-ranks. In particular we discuss three optimization algorithms for fitting fixed-rank models: batch gradient descent, alternating least squares, and stochastic gradient descent. We also present approaches for minimizing over-fitting. Specifically we discuss a group-sparsity-inducing regularization regularization scheme and hyper-parameter estimation scheme using cross validation and a rank-adapation. 

\subsection{Low-rank constraints and regularization}
\label{sec:lowrank-constraint}
The function space $\fspace$ in Equation \eqref{eq:fspace_learn} constrains the search space. The space constrained by low-rank functions $\ftspace$ is defined as follows.
Let \(\bvec{r} = [1,r_1,\ldots,r_{d-1},1]\) such that $r_i \in \integers_+$ for $i = 1,\ldots,d-1$. Then
\begin{equation*}
  \ftspace = \left\{
  f: \xspace{} \to \reals \left|
  \begin{array}{l}
    f = \mvf{F}_1(x_1)\mvf{F}_2(x_2)\ldots \mvf{F}_d(x_d) \\
    \mvf{F}_1:\xspace{1} \to \reals^{1 \times r_1} \\ 
    \mvf{F}_k:\xspace{k} \to \reals^{r_{k-1} \times r_{k}}, \  2 \leq k < d \\
    \mvf{F}_d:\xspace{d} \to \reals^{r_{d-1} \times 1} \\
  \end{array}
  \right.
  \right\}
\end{equation*}
denotes the space of functions with FT ranks \(\bvec{r}.\) Note that this function space also includes functions with smaller ranks. For example, a function that differs in ranks $\hat{r}_1,\hat{r}_2$ such that  $\hat{r}_1 < r_1$ and $\hat{r}_2 < r_2$ can be obtained by setting all univariate functions, aside from the top left $\hat{r}_1 \times \hat{r}_2$ block, to zero; analogous modifications must be made to all other cores.

In this paper we also add a regularization to attempt to minimize the number of nonzero functions in each core. The structure of the FT-cores in Equation \eqref{eq:ftcore} admits a natural grouping in terms of the univariate functions. Setting a univariate function to zero inside of a core essentially lowers the number of interactions allowable between  univariate functions in neighbouring dimensions, with the overall effect beign a smaller number of sums in the representation of the function in Equation \eqref{eq:ftlong}. Minimizes the number of nonzero univariate functions not only produces a more parsimonious model, but also minimizes the number of interactions between function variables. 

Specifically, we penalize the regression problem using the sum of the norms of the univariate functions 
 \begin{equation}\label{eq:tt_reg}
\Omega[f] = \sum_{k=1}^d\sum_{i=1}^{r_{k-1}}\sum_{j=1}^{r_{k}}\lVert \unif{k}{i}{j} \rVert^2.
 \end{equation}
Note that this regularization method is different from that first proposed in \citep{Mathelin2014} where the FT-c representation was used and the TT-cores of the coefficient tensor themselves were forced to be sparse. In our case we do not look for sparse coefficients, instead our goal is to increase the number of functions that are identically zero. As a surrogate for this goal we follow the standard practice of replacing the non-differential ``\(L_0\)" norm by the another norm, in our case a sum of the norms of the functions. This replacement also seeks to directly minimize the number of interacting univariate functions and also provides a differentiable objective for optimization.

\subsection{Hyper-parameter estimation using cross validation and rank adaptation}
\label{sec:rankadapt}

A careful selection of the number of parameters \(\pdimp{k}{i}{j}\) in each univariate function, the rank \(r_k\) of each core, and the Lagrangian multiplier \(\lambda\) in the regularized learning problem is required to avoid over fitting. For example if a polynomial basis is chosen, setting the degree too high can result in spurious oscillations.  Cross validation provides a mechanism to estimate hyper-parameters. In this paper we use 20 fold cross validation to select the hyper-parameters of an FT that minimizes the expected prediction error.

Alternatively, we can design a more effective rank adaptation scheme by combining cross validation with a rounding procedure. The problem with the simplest cross validation option is that a scheme that optimizes separately over each $r_k$ imposes a computational burden, and a scheme that optimizes for a single rank across all cores, i.e., $r_k=r$ does not allow enough flexibility. Instead, we propose to combine FT rounding \citep{Gorodetsky2015} and cross validation to avoid overfitting. 

Rounding is a procedure to reapproximate an FT by one with lower ranks to a given tolerance. The tolerance criterion can be thought of as a regularization term, since it limits the complexity of the representation and will be investigated in furture work. The full rank adaptation scheme is provided by Algorithm \ref{alg:rankadapt}. In that algorithm $\texttt{cv}(\bvec{r})$ refers to a function that provides a cross-validation error for an optimization over the space $\ftspace$, and the function $\texttt{rounding-rank}(f,\delta)$ provides the ranks of a rounded function $f$ with a particular tolerance $\delta$.

\begin{algorithm}
  \caption{Rank adaptation}
  \label{alg:rankadapt}
  \begin{algorithmic}[1]
    \REQUIRE Rounding tolerance $\delta$
    \STATE $\bvec{r} = \texttt{ones}(\xdim+1)$
    \STATE $\epsilon = \texttt{cv}(\bvec{r})$
    \WHILE{not converged}
      \FOR{$k = 1, \ldots d-1$}
        \STATE $r_k \leftarrow r_k + 1$ 
        \ENDFOR
      \STATE $\bvec{r} = [1, r_1, \ldots, r_{d-1}, 1]$
      \STATE $\hat{\epsilon} = \texttt{cv}(\bvec{r})$
      \IF {$\hat{\epsilon} > \epsilon$}
         \STATE $\bvec{r} = [1, r_1-1, \ldots, r_{d-1}-1, 1]$
         \STATE \textrm{break}
         \ENDIF
      \STATE $\epsilon = \hat{\epsilon}$
      \STATE $f = \min_{f \in \ftspace} J(\param)$
      \STATE $[1, \hat{r}_1, \ldots, \hat{r}_{d-1}, 1] = \texttt{rounding-rank}(f,\delta)$
      \IF {$\hat{r}_k < r_k$ for all $k=1,\ldots,d-1$}
         \STATE \textrm{break}
      \ELSE
         \STATE $r_k = \hat{r}_k$ for $k=1,\ldots,d-1$
      \ENDIF
%
    \ENDWHILE
  \end{algorithmic}
\end{algorithm}

The scheme increases ranks until either the cross-validation error increases \emph{or} until the rounding procedure decreases every rank. The first termination criterion targets overfitting, and the second termination criterion seeks to limit rank growth when data is no longer informative enough to necessitate the increase in expressivity.

\subsection{Optimization algorithms}

In this section, we overview three gradient-based optimization algorithms for solving supervised learning problems in low-rank format: alternating minimization, gradient decent and stochastic gradient decent. The computational complexity of these algorithms in analyzed in Section \ref{sec:analysis}.

\subsubsection{Alternating minimization}
\label{sec:als}

Alternating minimization methods, mainly alternating least squares, are the main avenue for optimization in low-rank tensor and functional contexts. These routines are typically used within tensor decompositions by sequentially sweeping over dimensions and optimizing over parameters of a single dimension. Such approaches are popular for compressing multiway arrays or for tensor completion problems \citep{Savostyanov2011,Grasedyck2015}.

For the case of supervised learning, their usage is straightforward. The idea is to solve a sequence of optimization problems, where the functional space \(\ftspace\) is further constrained by fixing \emph{all-but-one} FT core, \(\mvf{F}_k\). After optimizing over the \(k\)th core, that core is fixed and optimization over the next one is performed. This algorithm is provided by Algorithm \ref{alg:als}. This algorithm performs sweeps over all dimensions until convergence is obtained. For details on convergence of this algorithm we refer to, e.g., \cite{Uschmajew2012}. We will provide more details about the implementation and complexity of this algorithm in Section \ref{sec:analysis}.

\begin{algorithm}
  \caption{Alternating minimization for low-rank supervised learning}
  \label{alg:als}
  \begin{algorithmic}[1]
    \REQUIRE Parameterized low-rank function space, $\ftspace$; Initial FT cores $\mvf{F}_k$ for $k=1,\ldots,d$; Data, $\left(\x{}{i},\y{i}\right)_{i=1}^N$; Objective function, $J$; Convergence tolerence, $\epsilon$
    \STATE $i=0$
    \STATE $f^{(i)} = \mvf{F}_1 \cdots \mvf{F}_d$
    \WHILE{not converged}
    \FOR{$k = 1, \ldots d$}
    \STATE $\displaystyle{\mvf{F}_k = \argmin_{\hat{\mvf{F}}_k} J[\mvf{F}_1\cdots \mvf{F}_{k-1} \hat{\mvf{F}}_{k} \mvf{F}_{k+1}\cdots\mvf{F}_d]}$ \label{alg:als:line:argmin}
    \ENDFOR
    \STATE $f^{(i+1)} = \mvf{F}_1 \cdots \mvf{F}_d$
      \IF{ $\lVert f^{(i+1)} - f^{(i)} \rVert \leq \epsilon$}
      \STATE break;
      \ENDIF
    \STATE $i = i+1$
    \ENDWHILE
  \end{algorithmic}
\end{algorithm}

\subsubsection{Batch gradient methods}
\label{sec:allatonce}

Gradient descent directly minimizes the cost function $J(\param)$ with a batch gradient (or second-order) based procedure. While this is a standard optimization approach, we will refer to this algorithm as \emph{all-at-once} (AAO) to distinguish it from the alternating minimization. Gradient-based  procedures have been shown to be promising in the context of canonical tensor decompositions and gradient descent  \citep{Acar2011} and TT decompositions in the context of iterative thresholding \citep{Rauhut2017}, but have not been explored well for low-rank functional approximation.

Gradient descent updates parameters according to
\begin{equation*}
  \param \leftarrow \param - \eta \nabla J(\param),
\end{equation*}
for \(\eta > 0.\) More generally, we use a quasi-Newton method to choose a direction \(\bvec{p}\) such that the update becomes
\begin{equation*}
  \param \leftarrow \param - \eta \bvec{p}
\end{equation*}
In the examples in Section \ref{sec:experiments}, we use the limited-memory BFGS \citep{Liu1989} method available in the $C^3$ library\footnote{github.com/goroda/Compressed-Continuous-Computation} to perform this update.

One disadvantage of this approach that is often cited is that it involves solving an optimization problem whose size is the number of parameters in the function. However, we note that the number of parameters scales as \(\mathcal{O}(dr^2\pdimp{}{}{}),\) so for a one hundred dimensional, rank 5 problem with \(\pdimp{k}{i}{j}=\pdimp{}{}{}=5\) we have \(\approx 12500\) unknowns. This number of unknowns is well within the capabilities of modern hardware and optimization techniques.

\subsubsection{Stochastic gradient descent}
\label{sec:sgd}
Stochastic gradient descent (SGD) is often used instead of gradient decent when using large data sets. SGD aims to minimize objective functions of the form
\begin{equation*}
  J[\param] = \sum_{i=1}^{\ndata} g_{\param}(\x{}{i},\y{i}),
\end{equation*}
which includes the least-squares objective \eqref{eq:ls}. The objective function is updated using only one data point (batches can also be used),  which is chosen randomly at each iteration 
\begin{equation}
  \param \leftarrow \param - \eta \nabla g_{\param}(\x{}{i},\y{i}),
\end{equation}
Many variations on stochastic gradient descent have been developed. We refer the reader to \citep{Bottou2016} for more information. These variations often include adaptive strategies for choosing the learning rate $\eta$. Such methods have previous been applied in the context of tensors to minimize computation costs when dealing with large scale data \citep{Huang2015}.

We will use the adaptive strategy from ADAM \citep{Diederick2014}, as implemented in the $C^3$ library, to demonstrated the effectiveness of SGD in Section \ref{sec:experiments}.

\section{Gradient computation and analysis}
\label{sec:analysis}
The evaluation of the gradients of the FT with respect to its parameters is essential for the making gradient-based optimization algorithms tractable. Almost all existing literature for tensor approximation uses alternating minimization strategies because the subproblems are convex and can be solved exactly and efficiently with standard linear algebra algorithms, though they have limited (if any) guarantees for obtaining a good minimizer of the full problem. One of the major contributions of this paper is the derivation and analysis of the computational complexity of computing gradients of the supervised learning objective \eqref{eq:fspace_learn}.

The gradient of the least squares objective function with respect to a parameter is
\begin{equation} \label{eq:ls_grad}
  \frac{\partial J}{\partial \parami{\multi}} = \frac{2}{\ndata}\sum_{i=1}^{\ndata}\left( \y{i} - f\left(\x{}{i};\param\right)\right)  \frac{\partial f\left(\x{}{i};\param\right)}{\partial \parami{\multi}},
\end{equation}
where \(\multi = (k,i,j,\ell)\) denotes a multi-index for a unique parameter of the \(k\)th core, see Section \ref{sec:params}. Efficient computation of the partial derivative, \(\frac{\partial f(\x{};\param)}{\partial \parami{\multi}}\) is essential for making gradient based optimization algorithms tractable.
Letting
\begin{align}
  \mvf{F}_{<k}(\x{<k}) &= \mvf{F}_1(\x{1})\cdots\mvf{F}_{k-1}(\x{k-1}), \quad \textrm{ and } \label{eq:leftcores}\\
  \mvf{F}_{>k}(\x{>k}) &= \mvf{F}_{k+1}(\x{k+1})\cdots\mvf{F}_{d}(\x{d}). \label{eq:rightcores}
\end{align}
denote the products of the cores before and after the \(k\)th, respectively, and combining Equations \eqref{eq:ft}, \eqref{eq:leftcores}, and \eqref{eq:rightcores} we obtain the following expression
\begin{equation}\label{eq:ft_partial_deriv_param}
  \partial_{\multi}f(x) \equiv \frac{\partial f(\x{};\param)}{\partial \parami{\multi}} =\mvf{F}_{<k}(\x{<k}) \partial_{\multi} \mvf{F}_{k}(x_k) \mvf{F}_{>k}(\x{>k}),
\end{equation}
Thus efficiently computing gradients of the objective function and the FT requires efficient algorithms for evaluation of the left and right product cores \(\mvf{F}_{<k}(\x{<k})\) and  \(\mvf{F}_{>k}(\x{>k})\) and the partial derivative \(\partial_{\multi}\mvf{F}_{k}(x_k)\).

In the following  sections we describe and analyze the gradient of the FT with respect to its parameters. We first describe the partial derivatives of an FT with respect to the parameters of a single core. Then we present an efficient algorithm for computing left and right product cores. Finally, we discuss the computation of gradients of the full FT and the objective function.

\subsection{Derivatives of an FT core}
\label{sec:coregrad}
In this section we discuss how to obtain the partial derivatives with respect to parameters of a specific core. Without loss of generality , we will make the following assumption to ease the notational burden
\begin{assumption}\label{ass:numcoreparams}
  Each univariate function in each FT-core \eqref{eq:ftcore} is independently parameterized with $\pdimb$ parameters so that the FT core $\mvf{F}_k$ is parameterized with $r_{k-1} r_{k} \pdimb$ parameters.
\end{assumption}
Under this assumption the FT cores have the following structure.
\begin{proposition}\label{prop:sparsity}
  Let $\mat{G}_{kijl} \in \reals^{r_{k-1} \times r}$ denote the partial derivative $\frac{\partial \mvf{F}_k(z)}{\partial \paramuni{k}{i}{j}{\ell}}$ for some $z \in \xspace{k}$.  Under Assumption \ref{ass:numcoreparams}, $\mat{G}_{kijl}$ is a \textit{sparse} matrix with the following elements
  \begin{equation}
    \mat{G}[\alpha,\beta] = \left\{
    \begin{array}{cc}
      \frac{ \partial \unif{k}{i}{j}(x_k)}{\partial \paramuni{k}{i}{j}{\ell}} & \textrm{ if } \alpha=i,\beta=j \\
      0 & \textrm{ otherwise}
    \end{array} 
    \right.
  \end{equation}
  for $\alpha = 1, \ldots, r_{k-1}$, and $\beta = 1, \ldots r_{k}$.
\end{proposition}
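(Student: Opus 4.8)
The plan is to unpack the definition of the FT core in Equation \eqref{eq:ftcore} and differentiate it entrywise. Recall that under Assumption \ref{ass:numcoreparams} the core $\mvf{F}_k(x_k)$ is an $r_{k-1}\times r_k$ matrix whose $(\alpha,\beta)$ entry is the univariate function $\unif{k}{\alpha}{\beta}(x_k)$, and that each such univariate function is \emph{independently} parameterized by its own disjoint block of $\pdimb$ parameters. The parameter $\paramuni{k}{i}{j}{\ell}$ belongs solely to the function in row $i$, column $j$; it appears in no other entry of the core. Hence when we apply $\partial/\partial\paramuni{k}{i}{j}{\ell}$ to the matrix $\mvf{F}_k(z)$, we differentiate each entry separately, and every entry except the $(i,j)$ one is constant with respect to this parameter.

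The key steps, in order, are: (i) write $\mat{G}_{kij\ell}[\alpha,\beta] = \partial \unif{k}{\alpha}{\beta}(z) / \partial \paramuni{k}{i}{j}{\ell}$, which is just the definition of the entrywise partial derivative of a matrix-valued function; (ii) invoke the independence of parameterizations from Assumption \ref{ass:numcoreparams} to conclude that for $(\alpha,\beta)\neq(i,j)$ the function $\unif{k}{\alpha}{\beta}$ does not depend on $\paramuni{k}{i}{j}{\ell}$, so the derivative vanishes; (iii) for $(\alpha,\beta)=(i,j)$, the derivative is simply $\partial \unif{k}{i}{j}(z)/\partial \paramuni{k}{i}{j}{\ell}$, which is the single nonzero entry. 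Collecting these cases gives exactly the claimed formula, and the sparsity statement (a single nonzero entry, at position $(i,j)$) follows immediately.

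I do not expect any genuine obstacle here — the proposition is essentially a restatement of the independent-parameterization assumption in matrix notation, and the proof is a one-line observation that differentiation acts entrywise and that a function with no dependence on a variable has zero derivative with respect to it. The only thing to be careful about is bookkeeping: making sure the multi-index conventions of Section \ref{sec:params} (namely that $\multi=(k,i,j,\ell)$ picks out the $\ell$th parameter of the $(i,j)$ univariate function of core $k$) are used consistently, and noting that this derivative matrix $\mat{G}_{kij\ell}$ is precisely the factor $\partial_{\multi}\mvf{F}_k(x_k)$ appearing in Equation \eqref{eq:ft_partial_deriv_param}, so that the sparsity will later be exploited for computational efficiency.
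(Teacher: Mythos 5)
Your proof is correct and follows exactly the reasoning the paper relies on: the paper states this proposition without an explicit proof, treating it as an immediate consequence of the independent parameterization in Assumption \ref{ass:numcoreparams} and the entrywise definition of the core in Equation \eqref{eq:ftcore}, which is precisely the one-line observation you make. No gaps; your added bookkeeping about the multi-index convention and the connection to $\partial_{\multi}\mvf{F}_k(x_k)$ is consistent with the paper.
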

Now if we let \(G(\pdimb)\) denote the number of operations required to compute the gradient of a univariate function with respect to \emph{all} of its parameters then the cost of computing \(\partial_{\multi} \mvf{F}_k(x_k)\) is \(   \mathcal{O}\left(G(\pdimb)r_{k-1}r_{k}\right)\) operations.

\subsection{Evaluating left and right product cores}
\label{sec:productcores}
Computation of the partial derivative of the FT \eqref{eq:ft_partial_deriv_param} requires the evaluation of the left and right product cores \(\mvf{F}_{<k}(\x{<k})\) and  \(\mvf{F}_{>k}(\x{>k})\).
A single forward and backward sweep the cores can be used to obtain these values using the following recursion identities
\begin{equation*}
  \mvf{F}_{<k+1}(x_{<k+1}) = \mvf{F}_{<k}(x_{<k})\mvf{F}_k(x_k) \quad \textrm{ and }  \quad
  \mvf{F}_{>k-1}(x_{>k-1}) = \mvf{F}_k(x_k)\mvf{F}_{>k}(x_{<k}).
\end{equation*}
Thus we can obtain the gradient of the FT with respect to parameters of the \(k\)th core using the Algorithms \ref{alg:leftsweep} and \ref{alg:rightsweep}.
\begin{algorithm}
  \caption{\texttt{coregrad-left}: Generate intermediate result for gradient computation}
  \label{alg:leftsweep}
  \begin{algorithmic}[1]
    \REQUIRE Number of rows, $r_{k-1}$;  Number of columns $r_{k}$; Number of parameters per univariate function $\pdimb$; FT core $\ftcore{k}$; Left multiplier $\bvec{a}$
    \FOR{$i = 1, \ldots r_{k-1}$}
        \FOR{$j = 1, \ldots r_{k}$}
            \FOR{$\ell = 1, \ldots, \pdimb$} 
            \STATE\label{alg:line:grad1}$\partial f_{kijl} =\bvec{a}[i]\frac{\partial \uni{f}{k}{i}{j}(x_k)}{\partial \paramuni{k}{i}{j}{\ell}}$
            \ENDFOR
        \ENDFOR
    \ENDFOR
  \end{algorithmic}
\end{algorithm}

\begin{algorithm}
  \caption{\texttt{coregrad-right}: Update intermediate gradient result with multiplier from the right}
  \label{alg:rightsweep}
  \begin{algorithmic}[1]
    \REQUIRE Number of rows, $r_{k-1}$;  Number of columns $r_{k}$; Number of parameters per univariate function $\pdimb$; Intermediate result $\partial f_{kijl}$; Right multiplier $\bvec{c}$
    \FOR{$i = 1, \ldots r_{k-1}$}
        \FOR{$j = 1, \ldots r_{k}$}
            \FOR{$\ell = 1, \ldots, \pdimb$} 
            \STATE $ \partial f_{kij\ell} =  \partial f_{kij\ell}\bvec{c}[j]$ \label{alg:line:finalgrad} 
            \ENDFOR
        \ENDFOR
    \ENDFOR
  \end{algorithmic}
\end{algorithm}
These two algorithms consist of a triple nested loop. The innermost loop of \texttt{coregrad-left} requires computing the gradient of a univariate function with respect to its parameters, and multipling each element of the gradient by the left multiplier. Thus if $r_k<r$ then $\mathcal{O}(G(p)r^2)$ operations are needed for the gradient and $\mathcal{O}(pr^2)$ products between floating point operations are needed. Finally, the partial derivative with respect to each parameter of the core is stored with a complexity of $\mathcal{O}(pr^2)$ floating point numbers. Each of these partial derivatives is updated in \texttt{coregrad-right} for a computational cost of $\mathcal{O}(pr^2)$ and no additional storage. Since each of these functions is called $\xdim$ times, the additional computational cost they incur is $\mathcal{O}(\xdim r^2(G(p) + p))$ and the additional storage complexity they incur is $\mathcal{O}( \xdim pr^2)$.

\subsection{Gradient of FT and objective functions}
\label{sec:fullgradient}

Th entire gradient of the FT can be obtained with a single forward and backward sweep as presented in Algorithm \ref{alg:gradeval}. In Algorithm \ref{alg:gradeval} we use \(\partial f_{k\multi_k}\) to denote the partial derivative of \(f\) with respect to the parameters of the \(k\)th core. Each step of the forward sweep requires: (i) creating an intermediate result for the gradient of the FT with respect to each parameter of the core in line \ref{alg:line:grad1} of Algorithm \ref{alg:leftsweep}, (ii) evaluating and storing the core of the FT in line \ref{alg:line:fteval} of Algorithm \ref{alg:gradeval}, and (iii) updating the product of the cores through the current step in line \ref{alg:line:leftup} of Algorithm \ref{alg:gradeval}). The backward sweep involves involves updating the the intermediate gradient result obtained from the forward sweep (line \ref{alg:line:finalgrad} in Algorithm \ref{alg:rightsweep}), and then updating the product of the cores in line \ref{alg:line:rightprod} of Algorithm \ref{alg:gradeval}.

\begin{algorithm}
  \caption{FT evaluation and gradient} 
  \label{alg:gradeval}
  \begin{algorithmic}[1]
    \REQUIRE Parameters $\param$; Evaluation location $x = (x_1,x_2,\ldots,x_{\xdim}) \in \xspace{}$
    \STATE \COMMENT{Generate intermediate results during forward sweep}
    \STATE $\bvec{a} = \mvf{F}_1(x_1)$ \label{alg:line:eval1}
    \STATE $\partial f_{1\multi_1} = \texttt{coregrad-left}(r_{0},r_{1},\pdimb,\ftcore{1},\bvec{a})$ 
    \FOR{$k = 2, \ldots, \xdim$}
        \STATE $\partial f_{k\multi_k} = \texttt{coregrad-left}(r_{k-1},r_{k},\pdimb,\ftcore{k},\bvec{a})$ 
   \STATE $\mat{F}_k = \mvf{F}_k(x_k)$ \label{alg:line:fteval} 
   \STATE $\bvec{a} \leftarrow \bvec{a}\mat{F}_k$ \label{alg:line:leftup} 
   \ENDFOR
   \STATE $f(x;\param) = \bvec{a}[1]$ \COMMENT{Evaluation of the FT}
    \STATE \COMMENT{Update intermediate results during backward sweep}
    \STATE $\bvec{c} = \mat{F}_{\xdim}$
    \FOR{$k=\xdim-1, \ldots, 2$}
   \STATE $\partial f_{k\multi_{k}} = \texttt{coregrad-right}(r_{k-1},r_{k},\pdimb,\partial f_{k\multi_{k}},\bvec{c})$ 
   \STATE $\bvec{c} \leftarrow \mat{F}_k\bvec{c}$  \label{alg:line:rightprod} 
    \ENDFOR
    \STATE $\partial f_{1\multi_{1}} = \texttt{coregrad-right}(r_{0},r_{1},\pdimb, f_{1\multi_{1}},\bvec{c})$ 
  \end{algorithmic}
\end{algorithm}

  \begin{proposition}\label{prop:grad}
    Let $f:\xspace{} \to \reals$ be a rank $\bvec{r} = [1\ r_1 \ldots r_{d-1} \ 1]$ FT with every univariate function in Equation \eqref{eq:ftcore}. Assume Assumption \ref{ass:numcoreparams} and that for $k = 1, \ldots, d-1$ we have $r_k < r$ for some $r \in \integers_+$. Let $E(\pdimb)$ denote the number of operations required to evaluate a univariate function. Let $G(\pdimb)$ denote the number of operations required to compute the gradient of a univariate function with respect to all of its parameters.  Then, evaluating $f(\x{};\param)$ and computing the gradient $\partial_{\multi} f(\x{};\param)$ with respect to its parameters requires
    \begin{equation*}
      \mathcal{O}\left(\xdim r^2\left(G(\pdimb) + E(p) + \pdimb\right)\right)
    \end{equation*}
    operations, and storing $\mathcal{O}(\xdim\pdimb r^2)$ floating point values.
  \end{proposition}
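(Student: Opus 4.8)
The plan is to carefully track the cost and storage incurred by Algorithm \ref{alg:gradeval}, establishing the claimed bounds line by line for the forward sweep, then the backward sweep, and finally summing. The key structural fact I would lean on throughout is Proposition \ref{prop:sparsity}: the partial derivative $\partial_{\multi}\mvf{F}_k(x_k)$ is supported on a single entry, so a full set of core derivatives is effectively a collection of $r_{k-1}r_k$ univariate-function gradients, not a dense object. Combined with Assumption \ref{ass:numcoreparams} (so each core carries $r_{k-1}r_k\pdimb$ parameters) and the bound $r_k < r$, this lets me replace every $r_{k-1}r_k$ by $r^2$ in the final accounting.

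First I would analyze one step of the forward sweep. For core $k$, the call to \texttt{coregrad-left} executes a triple loop of size $r_{k-1}\times r_k\times \pdimb$; the inner body computes one entry of a univariate-function gradient and multiplies it by a component of the left multiplier $\bvec{a}$. Using $G(\pdimb)$ for the full univariate-function gradient cost, the gradient work for the core is $\mathcal{O}(G(\pdimb)r_{k-1}r_k)$ as noted after Proposition \ref{prop:sparsity}, plus $\mathcal{O}(\pdimb r_{k-1}r_k)$ scalar multiplications and the same order of stored floating-point values for the intermediate result $\partial f_{k\multi_k}$. Line \ref{alg:line:fteval} evaluates the core $\mvf{F}_k(x_k)$, which is $r_{k-1}r_k$ univariate evaluations, i.e. $\mathcal{O}(E(\pdimb)r_{k-1}r_k)$; line \ref{alg:line:leftup} updates $\bvec{a}\leftarrow\bvec{a}\mat{F}_k$, a vector–matrix product costing $\mathcal{O}(r_{k-1}r_k)$. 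Summing over $k=1,\ldots,d$ and bounding $r_{k-1}r_k\le r^2$ gives $\mathcal{O}\!\left(\xdim r^2(G(\pdimb)+E(\pdimb)+\pdimb)\right)$ for the forward sweep, with $\mathcal{O}(\xdim \pdimb r^2)$ floats stored for the intermediate gradient results.

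Next I would handle the backward sweep, which is cheaper. Each call to \texttt{coregrad-right} is a triple loop of size $r_{k-1}\times r_k\times\pdimb$ whose body is a single scalar multiplication, so $\mathcal{O}(\pdimb r_{k-1}r_k)$ work and no new storage; line \ref{alg:line:rightprod} updates $\bvec{c}\leftarrow\mat{F}_k\bvec{c}$ in $\mathcal{O}(r_{k-1}r_k)$. Over all $d$ cores this contributes $\mathcal{O}(\xdim \pdimb r^2)$, which is absorbed into the forward-sweep bound. Combining forward and backward sweeps, and noting the final evaluation $f(\x{};\param)=\bvec{a}[1]$ and the reassembly in \eqref{eq:ft_partial_deriv_param} are free given the stored left products, right products, and core derivatives, yields the stated $\mathcal{O}\!\left(\xdim r^2(G(\pdimb)+E(\pdimb)+\pdimb)\right)$ operation count and $\mathcal{O}(\xdim\pdimb r^2)$ storage.

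The main obstacle I anticipate is not any single calculation but making the bookkeeping honest: I need to confirm that the quantities $\partial f_{k\multi_k}$ produced on the forward sweep are exactly what the backward sweep consumes (so that no recomputation is hidden), and that the left/right product vectors $\bvec{a},\bvec{c}$ suffice to reconstruct every $\partial_{\multi}f(x) = \mvf{F}_{<k}(\x{<k})\,\partial_{\multi}\mvf{F}_k(x_k)\,\mvf{F}_{>k}(\x{>k})$ via \eqref{eq:ft_partial_deriv_param} without an extra factor of $d$ or $r$. The sparsity from Proposition \ref{prop:sparsity} is the crucial lever here: because $\partial_{\multi}\mvf{F}_k$ has one nonzero entry in row $i$, column $j$, the triple product collapses to $\bvec{a}[i]\cdot(\partial\unif{k}{i}{j}/\partial\paramuni{k}{i}{j}{\ell})\cdot\bvec{c}[j]$, which is precisely what Algorithms \ref{alg:leftsweep} and \ref{alg:rightsweep} compute — so I would make that identification explicit and let the complexity counts follow.
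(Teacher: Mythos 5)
Your proposal is correct and follows essentially the same route as the paper's proof: a line-by-line cost and storage accounting of Algorithm~\ref{alg:gradeval}, separating core evaluations ($\mathcal{O}(\xdim E(\pdimb) r^2)$), the vector--matrix updates ($\mathcal{O}(\xdim r^2)$), and the \texttt{coregrad-left}/\texttt{coregrad-right} calls ($\mathcal{O}(\xdim r^2(G(\pdimb)+\pdimb))$ work and $\mathcal{O}(\xdim \pdimb r^2)$ storage), then summing. Your explicit invocation of the sparsity structure from Proposition~\ref{prop:sparsity} to justify that the triple product collapses to $\bvec{a}[i]\cdot(\partial \unif{k}{i}{j}/\partial\paramuni{k}{i}{j}{\ell})\cdot\bvec{c}[j]$ is a welcome clarification that the paper leaves implicit, but it does not change the argument.
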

\begin{proof}
To demonstrate this result, we can calculate the number of evaluations and storage size required for each step of Algorithm~\ref{alg:gradeval}. First note that Lines~\ref{alg:line:eval1} and~\ref{alg:line:fteval} involve the evaluation of each FT core. Since each core has at most $r^2$ univariate functions these evaluations require $E(p)r^2$ operations and the ability to store $r^2$ floating point numbers. Since this evaluation has to happend for each of the \(d\) cores the computational complexity of this step is $dE(p)r^2$. The storage space can be reused and only requires storing two vectors of size \(r \times 1\) ($\bvec{a}$) and an \(r \times r\) matrix ($\mat{F}_k$) for a total storage complexity of $\mathcal{O}(r^2 + r) = \mathcal{O}(r^2)$ floating point numbers. Next we note that a product between a $1 \times r$ vector and an $r \times r$ matrix in lines~\ref{alg:line:leftup} and~\ref{alg:line:rightprod} needs to occur $2 \xdim$ times and therefore requires $\mathcal{O}(\xdim r^2)$ operations. Thus apart from the calls to $\texttt{coregrad-left}$ and $\texttt{coregrad-right}$ we have a computational complexity of $\mathcal{O}(\xdim E(p)r^2 )$ and a storage complexity of $\mathcal{O}(r^2)$. 

Combining these costs with the cost of the coregrad algorthims presented in Section \ref{sec:productcores} obtain the stated result.
\end{proof}

The values $G(p)$ and $E(p)$ are dependent upon the types of parameterizations of univariate functions. Consider two examples: one linear and one nonlinear. For both parameterizations we consider kernel basis functions; however, for the nonlinear parameterization we will consider the centers of each kernel as free parameters. The linear parameterization is given by Equation \eqref{eq:linparam} with \(\uni{\phi}{k\ell}{i}{j}(\x{k}) = \exp\left(-\frac{1}{\sigma^2}\left(\x{k} - c_{k\ell}\right)^2\right)\), where \(\sigma \in \reals_{+}\) and $c_{k\ell} \in \xspace{k}$ are the centers of the kernel such that each univariate function of the \(k\)th dimension is represented with as a sum of kernels with the same locations. If the evaluation of the exponential takes a constant number of operations with respect to $\x{k}$ then we have \(E(\pdimb) = \mathcal{O}(\pdimb)\) and \(G(\pdimb) = \mathcal{O}(\pdimb)\) because the the gradient of the univariate function with respect to its \(\ell\)th parameter is
\begin{equation}\label{eq:linunigrad}
\frac{ \partial \unif{k}{i}{j}(\x{k};\param)}{\partial \paramuni{k}{i}{j}{\ell}} = \exp\left(-\frac{1}{\sigma^2}\left(\x{k} - c_{k\ell}\right)^2\right).
\end{equation}
This gradient is independent of any other parameters. In practice this means that it can be precomputed at each location $\x{k}$ and recalled at runtime. In such a case the storage increases to $\mathcal{O}(\ndata \xdim  \pdimb r^2)$ numbers if each univariate function in each core has a different parameterization. If the univariate functions of each core share the same parameterization then the additional storage cost is  $\mathcal{O}(\ndata \pdimb \xdim)$. In either case the online cost becomes a simple lookup, i.e., \(G(p) = \mathcal{O}(1)\).

The nonlinear parameterization provided in Equation \eqref{eq:nonkernel} is different because we are free to optimize over the centers. In this case the gradient of each univariate function becomes
The gradient with respect to the second half of the parameters now depends on the parameter value
\begin{equation*}
  \frac{ \partial \unif{k}{i}{j}(\x{k};\param)}{\partial \paramuni{k}{i}{j}{\ell}}
   = 
\left \{
\begin{array}{cc}
\paramuni{k}{i}{j}{\ell} \exp\left(-\frac{1}{\sigma^2}\left(\x{k} - \paramuni{k}{i}{j}{(\pdimp{k}{i}{j}/2+\ell)}\right)^2 \right) & \textrm{ for } \ell = 1, \ldots, \pdimb/2 \\ 
\frac{2}{\sigma^2} \paramuni{k}{i}{j}{(\ell-\pdimb/2)}\left(\x{k} - \paramuni{k}{i}{j}{\ell}\right) \exp\left(-\frac{1}{\sigma^2}\left(\x{k} - \paramuni{k}{i}{j}{\ell}\right)^2 \right) & \textrm{ for } \ell = \pdimb/2+1, \ldots, \pdimb 
\end{array}
\right.
\end{equation*}
The parameter dependence of the gradient increases the complexity of optimization algorithms since precomputation of the gradients cannot be performed. In this case we have \(E(\pdimb) = \mathcal{O}(\pdimb)\) and \(G(\pdimb) = \mathcal{O}(\pdimb)\).

\subsection{Summary of computational complexity}
The complexity of evaluating the gradient of the least squares objective function 
\begin{equation*}
  \frac{\partial J}{\partial \parami{\multi}} = \frac{2}{\ndata}\sum_{i=1}^{\ndata}\left( \y{i} - f\left(\x{}{i};\param\right)\right)  \frac{\partial f\left(\x{}{i};\param\right)}{\partial \parami{\multi}}
\end{equation*}
is dominated by the cost of this derivative is evaluating the gradient of \(f\) at \(\ndata\) points. Consequently, the total complexity is \(\ndata\) times that provided by Proposition \ref{prop:grad}, for a total cost of \(\mathcal{O}(\ndata \xdim r^2(G(\pdimb)+E(\pdimb)))\) operations. The storage cost need not increase because one can evaluate the sum by sequentially iterating through each sample.

Now that we have described the computational complexity of the gradient computation, we summarize the computational complexity of the proposed optimization algorithms. Table \ref{tab:aao_complexity} shows the computational complexity of the all-at-once optimization scheme when using a low-memory BFGS solver. Three parameterizations are considered a linear parameterization with identical parameterizations of each univariate function in a particular core, a more general parameterization with varying linear parmaeterizations within each core and, and the most general case of nonlinear parameterization of each function. We see that using linear parameterizations allows us to precompute certain gradients before running the optimizer. This precomputation reduces the online cost of optimization. The computational complexity of the full solution is dominated by the number of evaluations and gradients of the objective function, and we denote this number as \(N_{\textrm{opt,AAO}}\). 
\begin{table}
\centering
\caption{Computational complexity of all-at-once optimization.}
\label{tab:aao_complexity}
\begin{tabular}[h]{|lccc|}
\hline
Parameterization type  &  Offline cost/storage &  Eval. and grad. (\(S\)) & Full solution \\
\hline
\hline
Shared/linear  & \(\mathcal{O}(\ndata \xdim G(\pdimb))\) / \(\mathcal{O}(\ndata \xdim \pdimb)\) & \(\mathcal{O}(\ndata \xdim r^2 \left( E(\pdimb) + \pdimb\right))\) &  \(\mathcal{O}(N_{\textrm{opt,AAO}}S)\)\\
General/linear & \(\mathcal{O}(\ndata \xdim r^2 G(\pdimb) )\) / \(\mathcal{O}(\ndata \xdim r^2 \pdimb)\) & \(\mathcal{O}(\ndata \xdim r^2 \left( E(\pdimb) + \pdimb\right) )\)  &  \(\mathcal{O}(N_{\textrm{opt,AAO}} S)\) \\  
nonlinear     & N/A &  \(\mathcal{O}\left(\ndata \xdim r^2\left( G(\pdimb) + E(\pdimb) + \pdimb\right)\right)\) &  \(\mathcal{O}(N_{\textrm{opt,AAO}} S) \) \\
\hline
\end{tabular}
\end{table}

The computational cost of stochastic gradient descent is given in Table \ref{tab:sgd_complexity}. In this case the cost per epoch (once through all of the training points) is the same as a single gradient evaluation in the all-at-once approach. The total cost of such a scheme is dominated by how many samples are used during the optimization procedures. In the associated table, we represent this number as the number of times one requires iterating through all of the samples \(N_{\textrm{epoch}} \ndata\). A fully online algorithm would not have any associated offline cost and its complexity would be equivalent to the nonlinear parameterization case. 
\begin{table}
\centering
\caption{Computational complexity of stochastic gradient descent with ADAM.}
\label{tab:sgd_complexity}
\begin{tabular}[h]{|lccc|}
\hline
Parameterization type  &  Offline cost/storage &  Eval. and grad. per sample (\(S\)) & Full solution\\
\hline
\hline
Shared/linear  & \(\mathcal{O}(\ndata \xdim G(\pdimb))\) / \(\mathcal{O}(\ndata \xdim \pdimb)\) & \(\mathcal{O}( \xdim r^2 \left( E(\pdimb) + \pdimb\right))\) & \(N_{\textrm{epoch}} \ndata S\)\\
General/linear & \(\mathcal{O}(\ndata \xdim r^2 G(\pdimb) )\) / \(\mathcal{O}(\ndata \xdim r^2 \pdimb)\) & \(\mathcal{O}( \xdim r^2 \left( E(\pdimb) + \pdimb\right) )\)  &  \(N_{\textrm{epoch}} \ndata S\)\\  
nonlinear    & N/A &  \(\mathcal{O}\left(\xdim r^2\left( G(\pdimb) + E(\pdimb) + \pdimb\right)\right)\) & \(N_{\textrm{epoch}} \ndata S\) \\
\hline
\end{tabular}
\end{table}

Finally, the alternating optimization scheme is of a slightly different nature. In the case of linear parameterization, each sub-optimization problem is quadratic and can be posed as solving a linear system by setting the right hand side of Equation \eqref{eq:ls_grad} to zero. This linear system has \(\ndata\) equations and \(\pdimb r^2\) unknowns and its solution, using a direct method, has an asymptotic complexity of \(\mathcal{O}\left(\ndata \pdimb^2 r^4\right) \). We also need to introduce a new constant called \(N_{\textrm{sweep}}\) that represents how many sweeps through all of the dimensions are required. We see that in the most general case, this algorithm is \(N_{sweep}\) times more expensive than all-at-once. However, this number is a bit desceptive since each sub problem has only \(\pdimb r^2\) unknowns and therefore we can assume that \(N_{\textrm{opt,ALS}} < N_{\textrm{opt,AAO}}\). In Table \ref{tab:als_complexity} we summarize the costs of this algorithm. 
\begin{table}
\centering
\caption{Computational complexity of alternating (non)-linear least squares.}
\label{tab:als_complexity}
\begin{tabular}[h]{|lccc|}
\hline
Param type  &  Offline cost/storage &  Solution of sub-problem (\(S\)) & Full solution\\
\hline
\hline
Shared/linear  & \(\mathcal{O}(\ndata \xdim G(\pdimb))\) / \(\mathcal{O}\left(\ndata \xdim \pdimb\right)\) & \(\mathcal{O}\left( \ndata \pdimb^2 r^4 \right)\) & \(\mathcal{O}\left( N_{\textrm{sweeps}}\xdim S \right)\)\\
General/linear & \(\mathcal{O}\left(\ndata \xdim r^2 G(\pdimb) \right)\) / \(\mathcal{O}\left(\ndata \xdim r^2 \pdimb\right)\) & \(\mathcal{O}\left( \ndata \pdimb^2 r^4 \right)\) & \(\mathcal{O}\left( N_{\textrm{sweeps}} \xdim S \right)\)\\
nonlinear   & N/A &  \(\mathcal{O}\left(N_{\textrm{opt,ALS}} \ndata r^2\left( G(\pdimb) + E(\pdimb) + \pdimb\right)\right)\) & \(\mathcal{O}\left( N_{\textrm{sweeps}} d S \right)\) \\
\hline
\end{tabular}
\end{table}

The summaries above were provided for the pure least squares regression problem. If we consider the regularization term of Equation \eqref{eq:tt_reg} then an additional step must be performed. The gradient of the functional \(\Omega[f]\) requires computing the gradient of each of the summands, which can be written as
\begin{align*}
& \frac{\partial \int_{\xspace{k}} \left[\unif{k}{i}{j}\left( x_k;\paramuni{k}{r_{k-1}}{r_{k}}{1}, \ldots, \paramuni{k}{r_{k-1}}{r_{k}}{\pdimb}\right) \right]^2  dx_k}{\partial \paramuni{k}{r_{k-1}}{r_k}{\ell}} = \\
& \quad \quad \quad  2 \int_{\xspace{k}} \frac{\partial \unif{k}{i}{j}\left( x_k;\paramuni{k}{r_{k-1}}{r_{k}}{1}, \ldots, \paramuni{k}{r_{k-1}}{r_{k}}{\pdimb}\right)}{\partial \paramuni{k}{r_{k-1}}{r_k}{\ell}}\unif{k}{i}{j}(x_k;\paramuni{k}{r_{k-1}}{r_{k}}{1}, \ldots, \paramuni{k}{r_{k-1}}{r_{k}}{\pdimb})  dx_k,
\end{align*}
for \(\ell = 1, \ldots, \pdimb\). In other words, to compute the gradient of every element in the sum in Equation \eqref{eq:tt_reg}, one computes the inner product between the original function and the function representing the partial derivative of this function. In the case of linear parameterizations described above, the partial derivative is simply a scalar and only the integral of the univariate function needs to be computed. Alternatively the full inner product must be evaluated; however we note that the evaluation of this integral is often analytical or available in closed form based upon the type of function. For example the cost is \(\mathcal{O}(\pdimb)\) for orthonormal basis functions and \(\mathcal{O}(\pdimb^2\)) more generally, to obtain the gradient with respect to every parameter of a univariate function.

\section{Experiments}
\label{sec:experiments}

In this section, we provide experimental validation of our approach. Our validation is performed on a synthetic function, approximation benchmark problems, and several real-world data sets. The synthetic examples are used to show the convergence of our approximations with increasing number of samples using the various optimization approaches. We also include comparisons between both the nonlinear parameterized FT and the linearly parameterized FT-c representations. Furthermore, we show the effectiveness of our rank adaptation approach.  The real-world data sets indicate the viability of FT-based regression for a wide variety of application areas and indicates pervasiveness of low-rank structure. The Compressed Continuous Computation ($C^3$) library \citep{c3} was used for all of experiments and is publicly available.

\subsection{Comparison of optimization strategies}
We first compare the convergence, with the number of samples, of three optimization algorithms. We use three synthetic test cases with known rank and parameterization. The first two functions are from a commonly used benchmark database \citep{simulationlib}. The third function is a FT-rank 2 function that is commonly used to demonstrate the performance of low-rank algorithms.

The first function is six dimensional and called the OTL circuit function. It is given by
\begin{equation}
f(R_{b1}, R_{b2}, R_{f}, R_{c1}, R_{c2}, \beta) = \frac{(V_{b1} + 0.74)\beta (R_{c2} + 9)}{\beta (R_{c2} + 9) + R_f} + \frac{11.35R_f}{\beta(R_{c2} + 9) + R_f} + \frac{0.74  R_f  \beta  (R_{c2} + 9)}{(\beta  (R_{c2} + 9) + R_f)R_{c1}}, 
\label{eq:otl}
\end{equation}
with \(V_{b1} = \frac{12R_{b2}}{R_{b1} + R_{b2}}\) and variable bounds \(R_{b1} \in [50, 150]\), \(R_{b2} \in [25, 70]\), \(R_{f} \in [0.5, 3]\), \(R_{c1} \in [1.2, 2.5]\), \(R_{c2} \in [0.25, 1.2],\) and \(\beta \in [50, 300]\). This function has a decaying spectrum due to the complicated sum of variables in the denominators, and it provides and an important test problem for our proposed rank adaptation scheme. Before exploring rank-adaptation first we explore the effectiveness of the three optimization algorithms in the context of fixed rank \(r\) and number of univariate parameters \(\pdimb\). Specifically we compute the relative squared error, using 10000 validation samples, for increasing number of training samples. We use a stopping criterion of \(10^{-13}\) for the difference between objective values for the gradient based techniques and for the difference between functions of successive sweeps for ALS. Though we have found that the results for these functions are not too sensitive to this tolerance. We solve 100 realizations of the optimization problem for each sample size and report the median, 25th, and 75th quantiles. The univariate function are parameterized using Legendre polynomials because we are using a domain with uniform measure.

Figure \ref{fig:otl_comparison} demonstrates that stochastic gradient descent and the all-at-once approach tend to outperform alternating least squares. This performance benefit is greater in regions of small sample sizes and large number of unknowns (large \(r\) and \(p\)). In the case of the largest number of unknowns in Figure \ref{fig:otl_4_8} our gradient based methods obtain an error that is several orders of magnitude lower error than the error obtained by ALS.
\begin{figure}
  \centering
  \begin{subfigure}[b]{0.3\textwidth}
    \includegraphics[width=\linewidth]{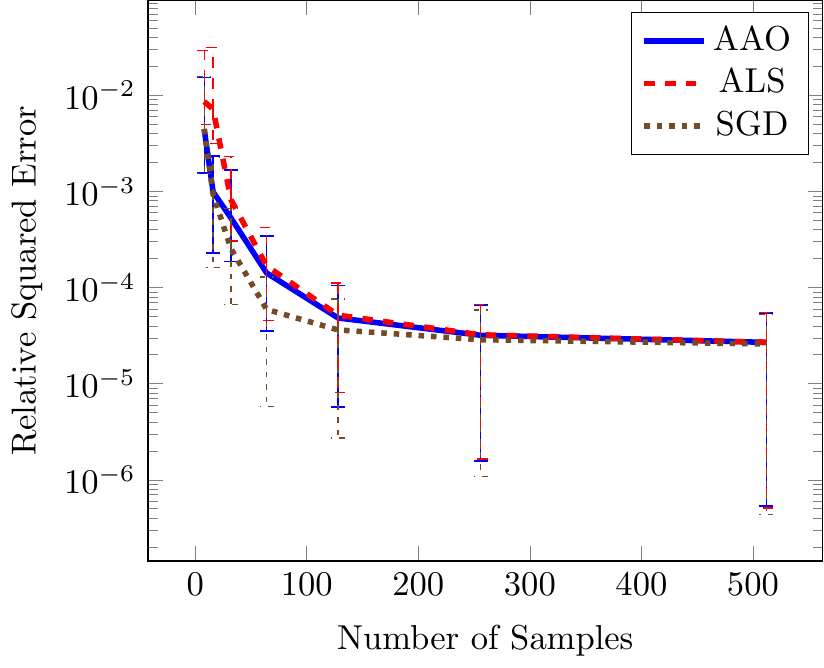}
    \caption{$r=2, p=3$}
    \label{fig:otl_2_2}
  \end{subfigure}
  ~ 
  \begin{subfigure}[b]{0.3\textwidth}
    \includegraphics[width=\linewidth]{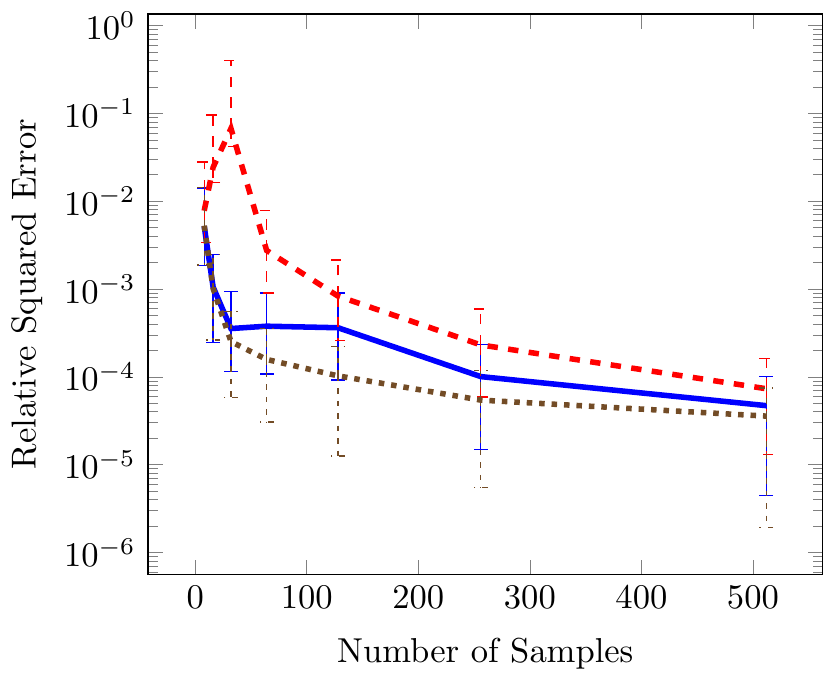}
    \caption{$r=4, p=3$}
    \label{fig:otl_4_2}
  \end{subfigure}
  ~ 
  \begin{subfigure}[b]{0.3\textwidth}
    \includegraphics[width=\linewidth]{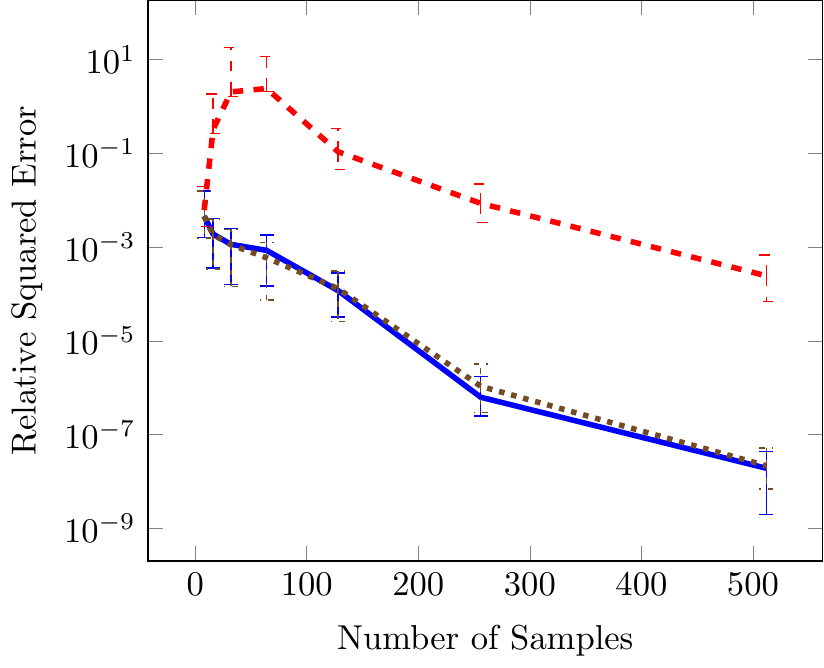}
    \caption{$r=4, p=9$}
    \label{fig:otl_4_8}
  \end{subfigure}
  \caption{Median, 25th, and 75th quantiles of relative error over 100 realizations of training samples for the OTL Circuit~\eqref{eq:otl}.}
  \label{fig:otl_comparison}
\end{figure}

Next we consider two cases in which the rank is known. The wing weight function is ten dimensional and given by 
\begin{align}
f(S_w,W_{fw},A,&\Lambda,q,\lambda,t_c,N_z,W_{dg},W_{p}) =  \nonumber\\
& 0.036S_{w}^{0.758}W_{fw}^{0.0035} \left( \frac{A}{\cos^{2}(\Lambda)}\right)^{0.6}q^{0.006}\lambda^{0.04}\left(\frac{100t_c}{\cos(\Lambda)}\right)^{-0.3}(N_zW_{dg})^{0.49} + S_wW_p, \label{eq:ww}
\end{align}
with variable bounds \(S_w \in [150, 200]\), \(W_{fw} \in [220,300]\), \(A \in [6,10]\), \(\Lambda \in [-10, 10]\), \(q \in [16, 45]\), \(\lambda \in [0.5, 1]\), \(t_c \in [0.08, 0.18]\), \(N_z \in [2.5, 6]\), \(W_{dg} \in [1700, 2500]\), \(W_{p} \in [0.025, 0.08]\). Using the variable ordering above the rank of this function is \(r=2\). The results in Figure \ref{fig:ww_comparison} indicate the same qualitative performance of the three optimizaiton methods. However, the difference in this case is that the SGD is not significantly better than all-at-once (AAO) approach for low sample sizes. In the third panel we see that SGD levels off around a relative squared error of \(10^{-9}\). For such small errors we have found it difficult to converge the SGD to smaller errors because of the tuning parameters involved in ADAM. In particular, the final error tolerance becomes sensitive to the choice of initial learning rate.
\begin{figure}
  \centering
  \begin{subfigure}[b]{0.3\textwidth}
    \includegraphics[width=\linewidth]{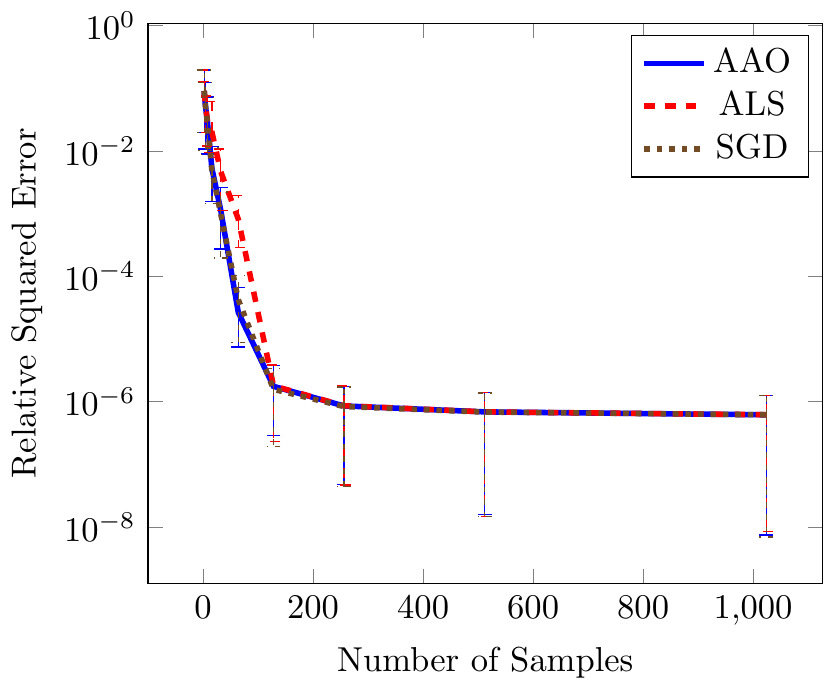}
    \caption{$p=3$}
    \label{fig:ww_2_2}
  \end{subfigure}
  ~ 
  \begin{subfigure}[b]{0.3\textwidth}
    \includegraphics[width=\linewidth]{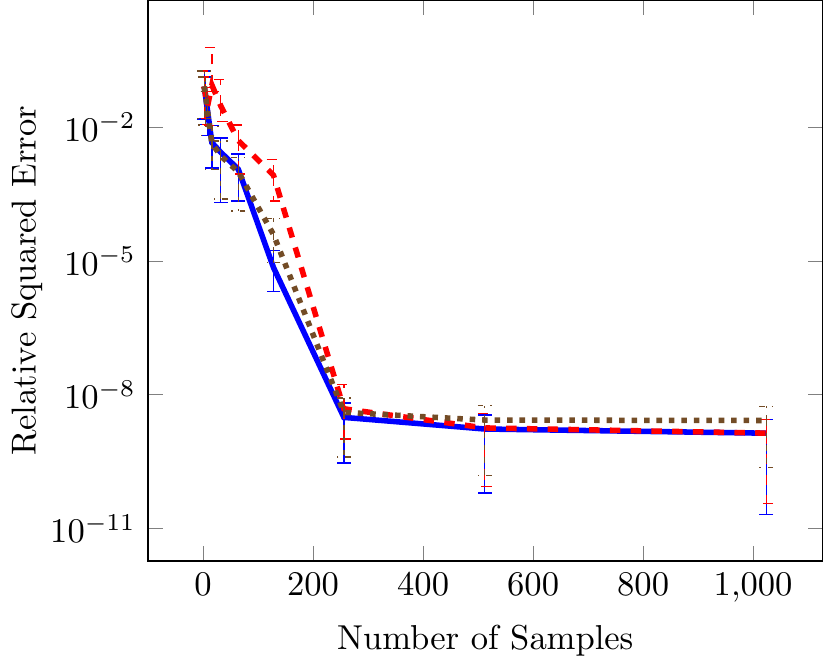}
    \caption{$p=5$}
    \label{fig:ww_4_2}
  \end{subfigure}
  ~ 
  \begin{subfigure}[b]{0.3\textwidth}
    \includegraphics[width=\linewidth]{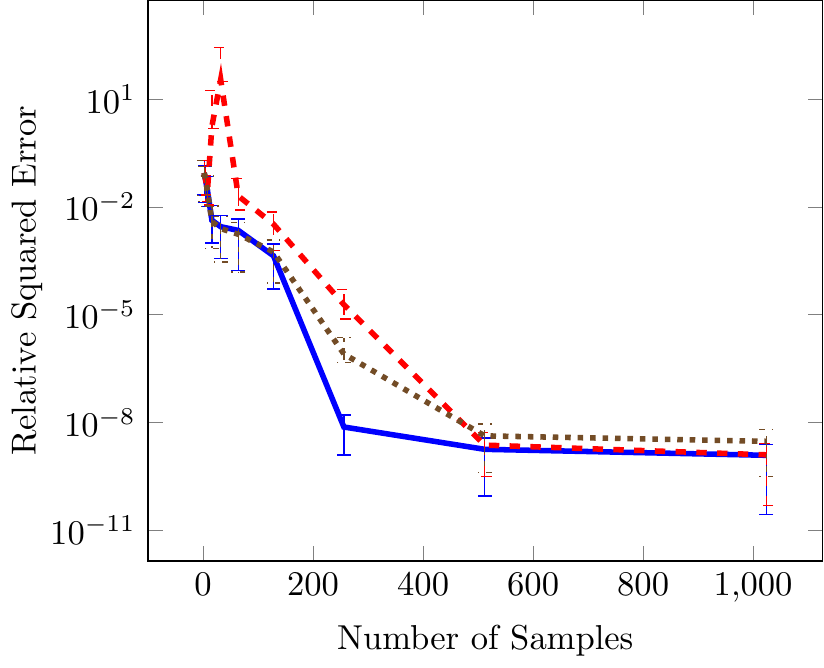}
    \caption{$p=6$}
    \label{fig:ww_4_5}
  \end{subfigure}
  \caption{Median, 25th, and 75th quantiles of relative error over 100 realizations of training samples for the Wing Weight function~\eqref{eq:ww}.}
  \label{fig:ww_comparison}
\end{figure}

The final test function is six dimensional and commonly used for testing tensor approximation because it has a TT-rank of 2 \citep{Oseledets2010}, this function is a Sine of sums
\begin{equation}
f(x) = \sin\left(\sum_{i=1}^6 x_i\right), \quad x_i \in [-1, 1], \quad i = 1, \ldots, 6.
\label{eq:sinsum}
\end{equation}
In Figure \ref{fig:sin_sum} we also see that the gradient based approaches are more effective in the case of small number of data sets, but that all achieve essentially the same minimums as the number of samples increase.

\begin{figure}
  \centering
  \begin{subfigure}[b]{0.3\textwidth}
    \includegraphics[width=\linewidth]{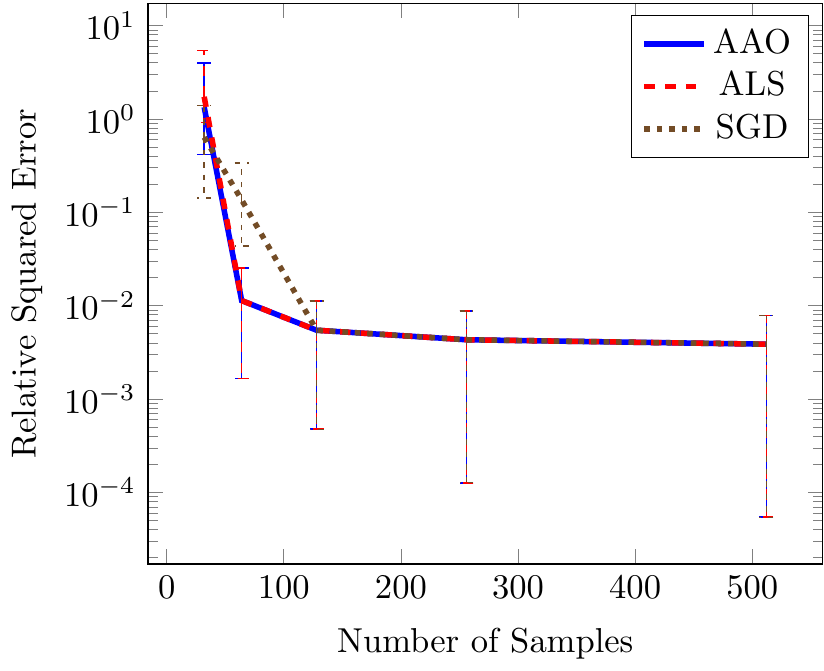}
    \caption{$p=3$}
    \label{fig:sinsum_2_2}
  \end{subfigure}
  ~ 
  \begin{subfigure}[b]{0.3\textwidth}
    \includegraphics[width=\linewidth]{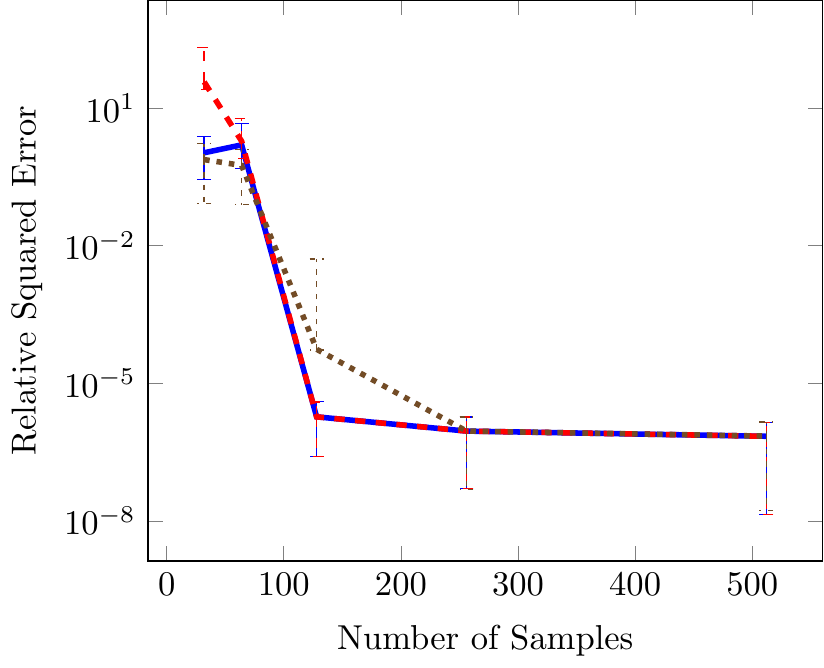}
    \caption{$p=5$}
    \label{fig:sinsum_2_4}
  \end{subfigure}
  ~ 
  \begin{subfigure}[b]{0.3\textwidth}
    \includegraphics[width=\linewidth]{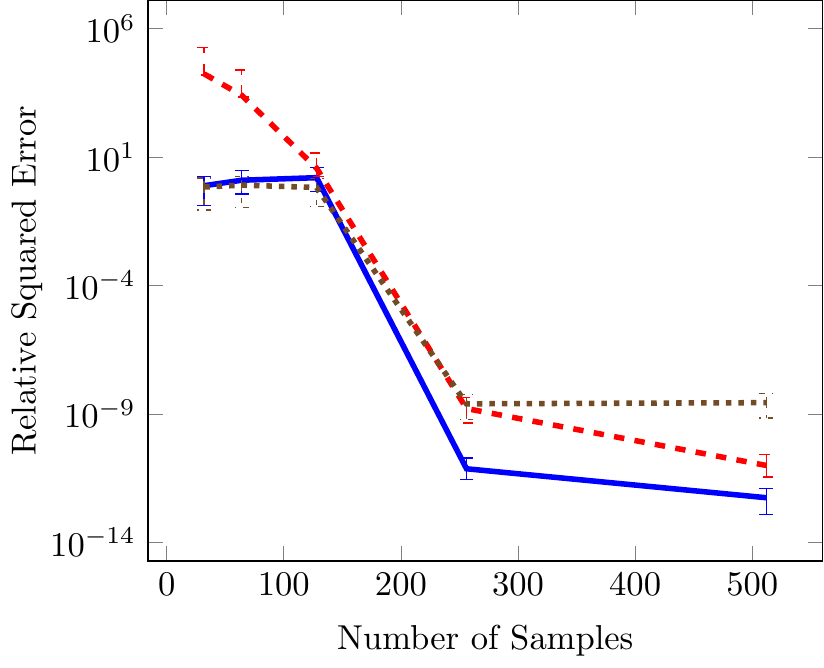}
    \caption{$p=9$}
    \label{fig:sinsum_2_8}
  \end{subfigure}
  \caption{Median, 25th, and 75th quantiles of relative error over 100 realizations of training samples for the Sine of sums~\eqref{eq:sinsum}.}
  \label{fig:sin_sum}
\end{figure}

\subsection{Linear vs nonlinear approximation}
Next we compare the FT and FT-c representations with different basis functions.  Specifically, using the OTL function \eqref{eq:otl}, we compare kernels at fixed locations and with kernels at optimized locations. For the linear approximation we use 8 kernels with fixed centers, and for the nonlinear approximation we use 4 kernels with moving centers (for a total of \(\pdimb=8\) for both approximation types). The results of this study are shown in Figure \ref{fig:otl_poly_vs_kernel}. Using AAO optimization we see that, for this problem, the nonlinear parameterization of kernels with moving centers provides a more effective representation. Specifically, we achieve an order of magnitude reduction in error when using the nonlinear moving-center representation.
\begin{figure}
  \centering
  \includegraphics[width=0.4\textwidth]{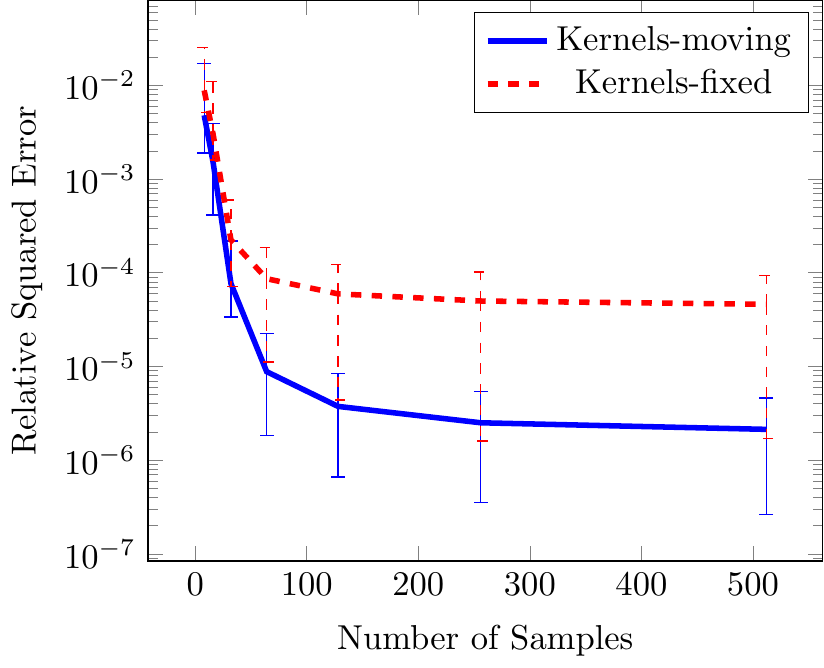}
  \caption{Comparison of parameterization with eight kernels with fixed and uniformly distributed centers and four kernels with with optimized centers with rank $r=4$. Median, 25th, and 75th quantiles are obtained over 100 realizations of training data. Experiments are performed on Equation \eqref{eq:otl}.}
  \label{fig:otl_poly_vs_kernel}
\end{figure}

\subsection{Rank adaptation}
Since the OTL circuit function \eqref{eq:otl} does not have finite rank, it is a good candidate for exploration of our rank adaptation scheme. With this goal, we compare rank adaptation with three approximations with varying fixed-rank, while fixing the polynomial order to eight. The results, shown in Figure \ref{fig:otl_rank_adapt}, indicate the desired behavior of the rank adaptation scheme, which follows the line of the best fixed-rank approximation. In particular for low sample sizes the rank-2 approximation is the best fixed-rank approximation, and the adaptive scheme shows approximately the same error. Once the sample size increases, the higher rank approximations converge more quickly and the adaptive scheme is able to follow the best approximation.
\begin{figure}
  \centering
  \includegraphics[width=0.4\textwidth]{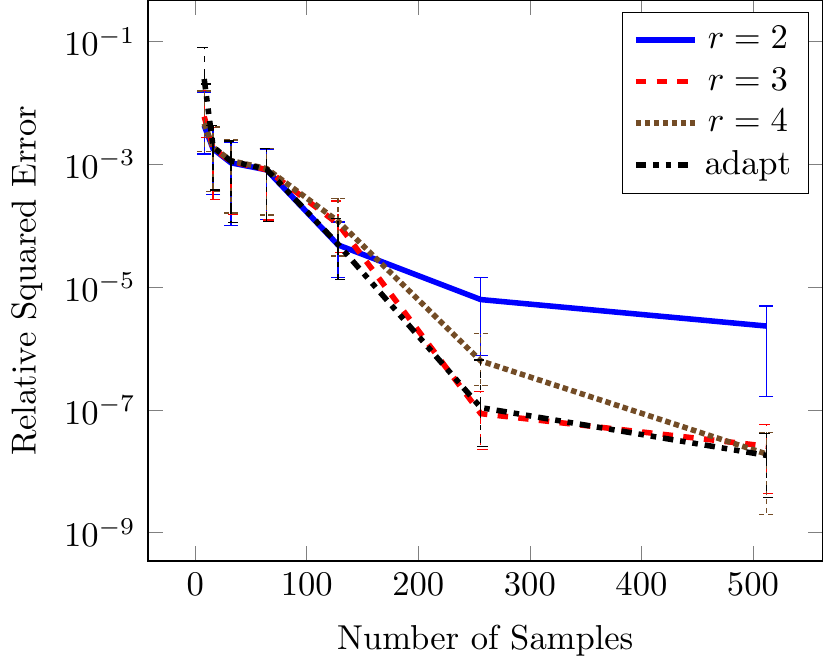}
  \caption{Rank adaptation performance, polynomial order is fixed to eight. Median, 25th, and 75th quantiles are obtained over 100 realizations of training data. The adaptive scheme follows the lowest-error-curve for each of the sample size. Experiments are performed on Equation \eqref{eq:otl}.}
  \label{fig:otl_rank_adapt}
\end{figure}

\subsection{Real-data performance}

In this section we compare regression in the FT format on real-world data sets and with other regression algorithms. In particular, we use 10 real-world data sets made available by \cite{Kandasamy2016} for which a set of 22 algorithms, 19 nonparametric and 3 parametric, were compared. The three parametric algorithms included a ridge regularized regression algorithm with linear basis functions, and two sparse regularized algorithms LASSO and LAR. These data sets are sourced from a variety of sources and preprocessed to normalize the inputs and outputs to zero mean and a standard deviation of one. To allow estimation of prediction error, the authors randomly split each data set in half to generate a training sample set and a separate validation set. The mean squared error 
\begin{equation*}
MSE = \frac{1}{N}\sum_{i=1}^{\ndata_{\textrm{validation}}} \left(\hat{f}(\x{}{i}) - \y{i}\right)^{2},
\end{equation*}
is calculated only over the validation set. 

In addition to comparing the FT to the other algorithms, we also compare with a robust and effective parameteric approximation scheme based on cross-validated LASSO. This scheme is able to use higher-order polynomials and we have found it to perform better than the results reported in the paper. Let \(\Phi\) be a Vandermonde-like matrix whose entries \(\Phi_{ij}=\phi_j(x_i)\) are the \(j\)-th basis function evaluated at the \(i\)-th point, then LASSO finds the basis coefficients that minimize
\begin{equation}\label{eq:lasso}
\lVert\Phi\param\rVert^2_2 + \lambda_\text{LASSO} \lVert\param\rVert_1
\end{equation}
We use least angle regression \cite{Tibshirani1996} to solve the LASSO problem and use 10-fold cross validation to choose the regularization parameter \(\lambda_\text{LASSO}\). We also use cross validation to simulatenously choose the degree of the total-degree polynomial basis. Only degree-one and degree-two polynomial spaces were considered because the size of the Vandermonde matrix in linear-parameteric representations grows exponentially with dimension. Note that these examples highlight the fact that we can use expressive basis functions in high high dimensions by exploiting low-rank structure. In other words, low-rank functional decompositions enable more expressive parameterized approximation forms.

For the FT, we use a kernel basis and perform 20-fold cross validation to choose the number of kernels \(\pdimb \in \{3, 6, 9\}\), the rank \(r \in \{1, 2\}\), the kernel width parameters \(w \in \{1,2,4,6,8\}\), and the regularization term \( \lambda \in \{10^{-3}, 10^{-7}\} \). The kernel width is chosen similarly to \citep{Kandasamy2016} where we have \(\sigma = w \ndata^{1/5} \hat{\sigma}\), and \(\hat{\sigma}\) denotes the standard deviation of the input data. Because we do not know the input domains for this data, therefore we position the kernels according to the empirical marginal distribution the data sets along each dimension. Specifically, we position the kernels at uniform quantiles of the data between the 10th and 90th quantiles. Finally, we use the AAO optimization setup, and report the mean squared errors on the testing data in Table \ref{tab:rd_res}. The FT is the best model for 2 data sets and in top five for seven data sets, it is the only parametric model that scored the best on at least one data set. The FT also outperforms the four other parametric models that use sparse regression or ridge regression. 
Because these data sets come from a wide variety of application areas, these results indicate that low-rank structure exists and is pervasive in a wide variety of regimes.

\begin{table}
\centering
\footnotesize
\caption{Mean squared error on validation data for a subset of regression algorithms and data sets from \cite{Kandasamy2016}. Only the algorithms which scored the best in at least one data set and the CV-based LASSO are shown. The FT is in the last column and highlighted in blue. Bolded numbers indicate the best performance for each data set and underlined numbers indicate a top five performance out of all 22 algorithms.}
\label{tab:rd_res}
\begin{tabular}[htb]{|c|c|c|c|c|c|c|c|}
\hline Dataset $(d,n)$ &SALSA &nSVR &RT &GBRT &MARS &\cellcolor{blue!25}FT &CVLASSO \\ 
\hline Housing (12,256) & \underline{\bf{0.26241}} & 0.38600 & 1.06015 & 0.42951 & 0.42379 & \underline{0.32798} & \underline{0.35218} \\ 
\hline Galaxy (20,2000) & \underline{\bf{0.00014}} & 0.15798 & 0.02293 & 0.01405 & \underline{0.00163} & \underline{0.00056} & 0.00249 \\ 
\hline Skillcraft (18,1700) & \underline{0.54695} & 0.66311 & 1.08047 & 0.57273 & \underline{0.54595} & \underline{\bf{0.54434}} & 0.56879 \\ 
\hline CCPP (59,2000) & 0.06782 & 0.09449 & 1.04527 & \underline{\bf{0.06181}} & 0.08189 & \underline{0.06631} & \underline{0.06466} \\ 
\hline Speech (21,520) & \underline{0.02246} & 0.06994 & 0.05430 & 0.03515 & \underline{\bf{0.01647}} & \underline{0.02684} & 0.03131 \\ 
\hline Music (90,1000) & \underline{0.62512} & \underline{\bf{0.59399}} & 1.45983 & 0.66652 & 0.88779 & 0.72094 & 0.64485 \\ 
\hline Telemonit (19,1000) & \underline{0.03473} & 0.05246 & \underline{\bf{0.01375}} & 0.04371 & \underline{0.02400} & 0.04040 & 0.06487 \\ 
\hline Propulsion (15,200) & \underline{0.00881} & 0.00910 & 0.02341 & \underline{0.00061} & 0.01290 & \underline{\bf{0.00009}} & 0.02660 \\ 
\hline Airfoil (40,750) & 0.51756 & 0.55118 & \underline{0.45249} & \underline{\bf{0.34461}} & 0.54552 & \underline{0.46920} & \underline{0.46444} \\ 
\hline Forestfires (10,211) & \underline{0.35301} & 0.43142 & 0.41531 & \underline{\bf{0.26162}} & \underline{0.33891} & 0.39465 & 0.44175 \\ 
\hline
\end{tabular}
\end{table}

\subsection{Application: modeling a propulsion plant on a naval vessel}

In this section we consider a simulation of a gas turbine propulsion engine mounted on a naval Frigate as detailed in \cite{Coraddu2013}, and for which simulation data is made openly available through the UCI Machine Learning Repository \citep{Lichman2013}. The goal of the UQ problem is to predict the degradation of the gas turbine based on certain parameters of its operation. According to \cite{Coraddu2013}, the model for the propulsion system is made of three components: the engine, the transmission gear, and the propulsor. The model is described by a set of nonlinear differential equations.

This simulation has sixteen parameters as detailed in Table \ref{tab:params}. The output that we attempt to predict using these simulation inputs is the gas turbine degradation coefficient that describes the gas flow rate reduction factor over service hours.
\begin{table}
\caption{Regression inputs to propulsion simulator from \cite{Coraddu2013}. HP denotes high-pressure and GT denotes gas turbine.}
\label{tab:params}
\centering
\begin{tabular}{lcc}
\hline
Variable &  Units \\
\hline
\hline
Lever position & (\(\cdot\)) \\
Ship speed & knot \\
Gas turbine shaft torque & kN m \\
Gas turbine rate of revolutions & r/min \\
Gas generator rate of revolutions & r/min \\
Starboard propeller torque & kN \\
Port propeller torque & kN \\
HP turbine exit temperature & C \\
GT compressor inlet air temperature & C \\ 
GT compressor outlet air temperature & C \\
HP turbine exit pressure & bar \\
GT compressor inlet air pressure & bar \\
GT compressor outlet air pressure & bar \\
GT exhaust gas pressure & bar \\
Turbine injection Control & \% \\
Fuel flow & kg / s\\
\hline
\end{tabular}
\end{table}

We use the data provided through the UCI repository to compare our proposed low-rank regression methodology to commonly used sparse regression algorithms. This data consists of 11934 instances of parameters and outputs. We also noticed that as part of this data the GT compressor inlet temperature and the GT compresser inlet air pressure did not vary, but we still included these variables in the regression problem to check if our approaches are robust to such cases.

We use the CV-based LASSO scheme described above using a total order basis of up-to 4th order polynomials and we use a rank adaptive low-rank regression scheme using the AAO approach. We also cross validate for up to 4th order polynomials using 5-fold cross validation, and we limit the BFGS algorithm to 500 iterations. We perform regression for 20 realizations of training data. We consider training sample sizes of  \(29, 59, 119 \), and 238 samples, and we validate on the remaining data. 

Figure \ref{fig:propulsion_results} demonstrate that we are able to achieve several orders of magnitude reduction in MSE with the low-rank approach as compare to the sparse regression approach for this problem. These results suggest that high frequency interactions exist between the input parameters that are not captured using a total order polynomial expansion. Indeed a full tensor product basis is needed, and that the coefficients of this tensor product basis are low rank. Performing sparse regression with the full tensor product basis of 4th order polynomials would have required solving for \(5^{16} = \mathcal{O}(10^{11})\), or approximately 152 billion unknowns.

\begin{figure}
  \centering
  \includegraphics[width=0.4\textwidth]{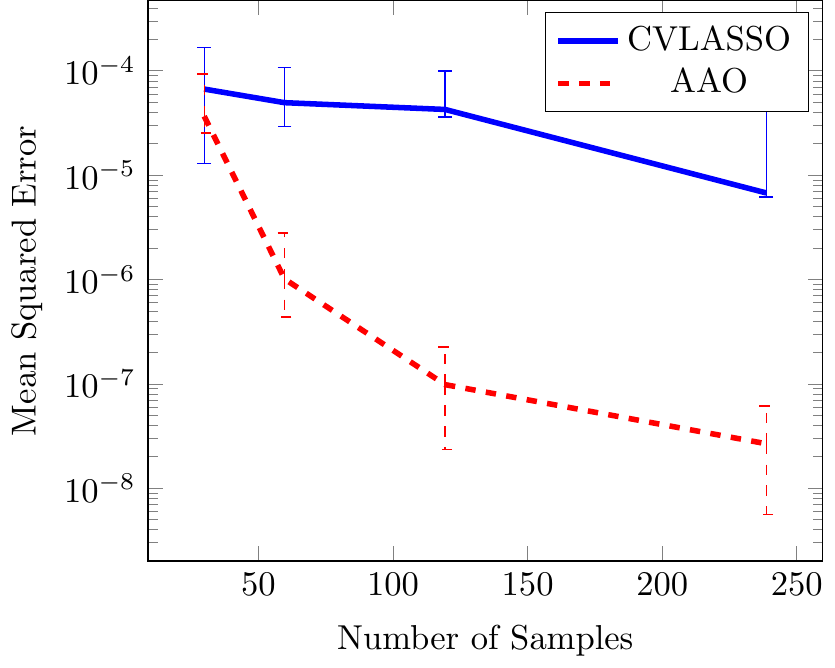}
  \caption{Comparison of rank adaptive low-rank regression with AAO optimization and CV LASSO on simulated data from the gas turbine of a naval vessel propulsion plant from \cite{Coraddu2013}.}
  \label{fig:propulsion_results}
\end{figure}

\section{Conclusions}

We have derived and analyzed the computational complexity of gradient based optimization for regression in a low-rank functional tensor format, the functional tensor-train (FT). Our analysis is valid for both the common FT-c variant, where a tensor-train of coefficients of a tensor product basis is used to represent the function, and for the more general case where the FT is represented using a set of (non)linearly parameterized univariate functions. Furthermore, we have proposed and demonstrated the effectiveness of both a rank-adaptation scheme to prevent overfitting and nonlinear parameterizations of univariate functions to increase expressivity.

Our results indicate that full gradient-based approaches have significant accuracy advantages over the standard practice of alternating least squares optimization. The accuracy of full gradient based schemes is especially improved in the context of low sample sizes with respect to the number of free parameters. These empirical findings hold true for both batch gradient optimization methods such as BFGS and stochastic gradient descent methods such as ADAM. Furthermore, we have shown that low-rank approximation itself is a promising model for general function approximation and machine learning. In particular, tests on 10 data sets from various application areas indicate that the FT is competetitive, and sometimes better than, 22 other commonly used algorithms.

There exists many directions for future research. The first direction is improving the performance for noisy and/or small data through more effective regularization techniques. Here, we have incorporated a basic group sparsity regularization term, but this area of research is actively being developed and can be expanded into the multilinear context we consider here. Another direction for research is reducing the need to cross validate over the number of parameters in each basis through more adaptive techniques that modify the number of parameters on the fly.

\section*{Acknowledgements}
This work was supported by the DARPA EQUIPS project and by the John von Neumann Postdoctoral Fellowship offered through the U.S. Department of Energy, Office of Science, Office of Advanced Scientific Computing Research, Applied Mathematics program. Sandia National Laboratories is a multimission laboratory managed and operated by National Technology and Engineering Solutions of Sandia, LLC., a wholly owned subsidiary of Honeywell International, Inc., for the U.S. Department of Energy's National Nuclear Security Administration under contract DE-NA-0003525.

\bibliographystyle{unsrtnat}

\bibliography{../bibliography/references}
\end{document}